\theoremstyle{plain}
\newtheorem{theorem}{Theorem}
\newtheorem{lemma}{Lemma}
\newtheorem{corollary}{Corollary}
\newtheorem{proposition}{Proposition}
\theoremstyle{definition}
\newtheorem{definition}{Definition}
\newtheorem{example}{Example}
\newtheorem{remark}{Remark}
\begin{document}
	
\date{}

\title{Pareto-optimal Nash equilibrium in capacity allocation game for self-managed networks}

\author{Dariusz G\k{a}sior, Maciej Drwal\\ 
	\small Institute of Computer Science\\ 
	\small Wroc\l{}aw University of Technology, Wroc\l{}aw, Poland\\ 
	\small \{dariusz.gasior, maciej.drwal\}@pwr.wroc.pl}

\maketitle


\begin{abstract}
	
	In this paper we introduce a capacity allocation game which models the problem of maximizing network utility from the perspective of distributed noncooperative agents. Motivated by the idea of self-managed networks, in the developed framework decision-making entities are associated with individual transmission links, deciding on the way they split capacity among concurrent flows. An efficient decentralized algorithm is given for computing strongly Pareto-optimal strategies, constituting a pure Nash equilibrium. Subsequently, we discuss the properties of the introduced game related to the Price of Anarchy and Price of Stability. The paper is concluded with an experimental study.

\medskip
\noindent
\textit{\textbf{Keywords:}	computer networks, algorithmic game theory, capacity allocation}
\end{abstract}


\section{Introduction}

Modern communication networks provide universal systems of data exchange within diversified services and applications. Publicly available communication channels, maintained by Internet service providers (ISPs), are shared by very large numbers of concurrent packet flows. Each of such data transmission usually corresponds to the communication between a client application (invoked by a user) and a server application. On the global scale, users can be seen acting independently and willing to selfishly maximize their utility, reflected in their transmission speed or response delay. One distinctive characteristic of such systems is the lack of central coordination or regulation. 

In the presence of limited communication resources, packet transfer protocols need to incorporate congestion avoidance functionalities. It has been shown \cite{kelly2001mathematical} what kind of users' utility can be maximized with the use of Internet transmission control protocol (TCP) \cite{jacobson1988congestion}. Many challenging questions arise when one seeks to design a communication network in such a way so as to optimize a given utility measure. In the last years, this line of research has stimulated many advancements in the area of distributed mathematical optimization \cite{boyd2010distributed}. Some of the most interesting results were obtained with the use of algorithmic game theory, which has become a method of choice for analyzing properties of distributed protocols. 

In this paper we employ this approach to the analysis of distributed transmission rate control problem, formulated within the network utility maximization framework. In the majority of prior research on the network resource allocation games, usually users (clients or flow sources) were modeled as players. Motivated by the idea of self-managed (autonomic) networking, we propose an alternative formulation, in which players are associated only with transmission links. More specifically, in our model, each decision-making agent manages one outgoing router interface, connected directly to some other node. Consequently, with each node there can be associated multiple players, but their decisions correspond to disjoint subsets of flows.

It was argued that the part of network maintenance cost corresponding to human administering/operating of the system is rapidly growing, and becomes negligible in comparison to the devices' prices. It is predicted that such trend will last in the coming years \cite{agoulmine2011autonomic}. Therefore, it is crucial to develop mechanisms which enable managing the network resources in an automatic or semi-automatic manner. The aim of the proposed solution is to limit the human administrator role only to defining the goals of system's operation. It is assumed that network routers may be considered as autonomous entities, which operate independently. Only limited coordinating communication is allowed between them.

\subsection{Related Work}\label{sec:related}

The idea of autonomic networking was introduced by IBM \cite{kephart2003vision}. A similar concept underlies the self-organizing networks (SON) \cite{prehofer2005self}. Recently, SON approach has been extensively studied for the application in 4G LTE mobile networks \cite{hu2010self}. In \cite{mortier2006autonomic} authors argue that some existing protocols like TCP or Open Shortest Path First (OSPF) may be treated as basic solutions for autonomic networking.

The central network resource allocation problem is the network utility maximization problem (NUM), which is also discussed in this paper. As formulated in \cite{kelly1998rate}, it provides the basis for further considerations. An extensive survey of the utility-based approach applied to the analysis of network resource allocation can be found in \cite{chiang2007layering}. Moreover, in \cite{kelly2000models} it was shown that the NUM framework suits well for developing a self-managing mechanism for the Internet.

A survey of the most important approaches in autonomic network management may be found in \cite{dobson2006survey} and \cite{agoulmine2011autonomic}.
However, it is worth noting that the most common concepts towards self-management are related to control theoretic approach \cite{diao2005self}, biological inspired mechanisms \cite{balasubramaniam2006biologically} and game theory \cite{mackenzie2001game}.

Game theory is a very powerful framework for studying decision making problems, involving a group of agents acting individually, being rational and competing or cooperating to achieve certain goals \cite{myerson1997game}. It provides mathematical tools for analyzing the consequences of agents' behavior and enables developing mechanisms, which encourage them to take expected actions. Game theory has been widely studied in the context of many different applications, mostly in economics, but also in politics, biology, philosophy and computer science \cite{leyton2008essentials}. An introductory material on game theory may be found in \cite{easley2010networks, osborne2004introduction}. 

In the recent years a subfield known as algorithmic game theory has emerged \cite{koutsoupias1999worst, papadimitriou2001algorithms}, combining game theory and algorithms design. This was mainly motivated by the need for analysis of interaction of independent agents in the Internet, in such problems as inter-domain routing, peering, online auctions, online advertising, etc. The problems tackled with the use of algorithmic game theory include establishing the existence of Nash equilibria, computing the Price of Anarchy and Price of Stability, and designing computationally efficient procedures for determining players' strategies. Moreover, employing the mechanism design techniques allows for constructing and analyzing computational procedures executed by collections of machines \cite{nisan1999algorithmic}. We refer to \cite{nisan2007algorithmic} as a comprehensive textbook on algorithmic game theory. 

Many interesting results in algorithmic game theory applied to computer networks have been obtained in the last ten years. In \cite{leyton2008essentials} two TCP clients are interpreted as players in the {\it prisoner dilemma} game. Another game-theoretical analysis of TCP is given in \cite{trinh2004game}, focusing on the Vegas version of this protocol. In \cite{kameda2008inefficient} it is shown that the noncooperative games for flow control problems have Pareto-inefficient Nash equilibria. 

In \cite{tang2008game} the co-existence of different congestion avoidance protocols is considered. It is shown that some properties related to the NUM approach do not hold in the presence of heterogeneous congestion signals. Such a situation is explained through game theoretical framework.

The important class of games concerning allocation problems in networks (not necessarily communication networks) are congestion games \cite{rosenthal1973class}. Typically congestion games are applied to routing problems in computer networks, where the  sources (users) are interpreted as players deciding on the selection of paths to transmit data at a given rate \cite{fleischer2004tolls, karakostas2004edge}. The player's strategy consists in deciding how to split this rate among all possible paths from the source to the destination, or, if flows are unsplittable, which routes to use for transmission. In \cite{korilis1995architecting} and \cite{korilis1997capacity} the authors propose a methodology of architecting noncooperative games for network resource allocation problem, which may improve overall system performance during provisioning and operating phase of network lifecycle. The solution is obtained for a parallel link network structure. It is shown that for such a case, the occurrence of the Braess paradox \cite{bean1997braess} may be avoided. Some of these results are extended for a general network. In \cite{johari2003network} the congestion game for the rate allocation problem is presented. The variant of a one-link network is analyzed, and it is shown that for such case, the Price of Anarchy is no greater than 4/3. Similarly, the extension for general networks is briefly discussed.

Bottleneck games are a similar class of routing games, in which a different payoff function is used \cite{harks2010computing}. Although the Nash equilibria for such games usually exist, their performance (estimated via Price of Anarchy values) is usually poor. A game with a relatively low Price of Anarchy is proposed in \cite{kannan2010bottleneck}. In \cite{banner2007bottleneck} two types of bottleneck games are considered, for splittable and unsplittable flows. It is also shown that for both proposed games the Price of Anarchy is unbounded. However, it is proven that under some mild conditions the Nash equilibrium is socially optimal.

Work \cite{larroca2009routing} considers both congestion game and bottleneck game, in the application to the routing problem. It also proposes a new routing game specifically for the elastic flows. All three approaches are compared. Basing on one example and two real network experiments, some advantages of the introduced game are shown.

In \cite{park2000quality} the approach to resource allocation for the networks with quality of service (QoS) based on Differentiated Services \cite{blake1998architecture} architecture is proposed. The sources (flows) are players. They choose one QoS class and the transmission rate in this chosen class. The players' payoffs are proportional to the transmission rate if their QoS requirements are satisfied and zero otherwise. For the proposed noncooperative game, a simple algorithm computing Nash equilibrium is presented. The extension of this concept is given in \cite{fuzesi2002game}.

The joint problem of QoS routing and capacity allocation problem is considered in \cite{elias2010joint}. In the proposed game, two groups of players are introduced, namely capacity players (each related to one link) and network users (each related to one pair of source and destination). Each capacity player divides its capacity among given Class of Services to minimize overall congestion over the associated link. On the other hand each user splits their traffic among all available paths so as to maximize a degree of satisfaction.

In \cite{zhou2010game} the bandwidth allocation problem in the virtual networks (VN) \cite{drwal2011utility} environment is considered. The problem is presented in terms of the non-cooperative game between service providers (VNs' owners) seen as players. The strategy of a player is determined by virtual links' capacities and flow rates in the particular VN. The utility and cost functions constitute the payoffs. The constraints concerning limited amount of physical links' capacities (bandwidths) are substituted with a congestion cost which is one of the addends of the cost function. Authors prove the existence of Nash equilibrium for such a game. An iterative algorithm is proposed, converging to the equilibrium, based on the best response method.

Furthermore, another type of games called auctions \cite{mcafee1987auctions} seems very suitable for computer network applications \cite{koutsopoulos2010auction}.
In \cite{shakkottai2007network} classic Vickerey-Clarke-Groves (VCG) mechanism \cite{makowski1987vickrey}, together with the so-called Kelly mechanism (based on results obtained in \cite{kelly1998rate}), is used for the network resource allocation.
In \cite{dimakis2006mechanisms} the capacity allocation problem is stated as an auction game between flows (users), seen as buyers, and network operator, seen as an auctioneer. A distributed algorithm to find efficient Nash equilibrium is proposed. The presented mechanism is described as VCG-like, since, on the contrary to the classic VCG auction, it does not require a full valuation function. 

Currently, the game-theoretical framework is also extensively studied in the context of wireless networks \cite{felegyhazi2006game}. For instance, in \cite{sahasrabudhe2008bandwidth} bandwidth allocation problem for a class of wireless networks is investigated. The uniqueness of Nash equilibrium for some particular network topologies is shown. It is also stated that some of the presented results may be also generalized for different cases. In \cite{mittal2008game} the problem of choosing an access point by a mobile user is considered from the perspective of this approach. Similar frameworks for issues in wireless networks categorized under corresponding OSI Layers (namely: physical, data link, network and transport layers) are presented in \cite{charilas2010survey}. The VCG auctions were also applied to the wireless networks, e.g. in \cite{fu2007noncollaborative} it is proposed for the resource allocation problem in multimedia wireless networks.

In \cite{saad2009coalitional} coalitional games for a communication systems are considered. 
A classification of such games distinguishing three main types of cooperative games is given.
It is stressed that the need of autonomic and self-organizing networks implies the necessity of developing distributed algorithms which enable each network device to make independent decision concerning network management.
Application examples of cooperative games in computer networks, mainly wireless, are discussed. Presented arguments corroborate the game-theoretic approach as a promising solution for autonomic (and self-organizing) networks.

More detailed surveys of game theoretical applications in various network resource allocation problems may be also found in \cite{altman2006survey, charilas2010survey}.

The solution approach presented in this paper, can be seen as an efficient decentralized heuristic for the network utility maximization problem. There exists a large body of work on this subject; most of these works however focus on exact algorithms, formulated as gradient-based procedures \cite{beck2013optimal}, Lagrangian methods \cite{palomar2006tutorial}, dual decompositions \cite{boyd2010distributed}, Newton-type procedures \cite{athuraliya2000optimization} and interior-point methods \cite{dolev2009distributed}. Since the NUM problem is basically a convex optimization problem, it can be solved in polynomial time. However, due to large scale of practical instances, exact solutions often require considerable computational effort. Moreover, since many models involve uncertain parameters, it is justified to consider approximate solutions. For general multicommodity flow problems approximation algorithms were considered in \cite{leighton1999multicommodity}. For a special type of flow control problems, where formulation can be stated with the use of positive linear programs, a distributed approximation algorithm was given in \cite{bartal2004fast}. Our solution is more general, as a wider class of utility functions are allowed, however only for a restricted subset of instances the approximation bounds are proven; for general network topologies we consider our algorithm a heuristic method, and evaluate its effectiveness via computational experiments.

\subsection{Original contribution}

The main contribution of this paper is an efficient distributed algorithm which provably halts at Pareto-optimal Nash equilibrium of link capacity allocation in the considered utility maximization problem. Unlike typical distributed methods of solving such network problems (e.g. Lagrangian relaxation-based methods for finding saddle point) our approach leads to a fast constructional procedure. Although finding optimal allocation is not guaranteed, computational experiments show that the proposed algorithm is very scalable. It requires only a few iterations regardless of the number of links and flows, and gives a very good approximation of the optimal solution. 

Moreover, this paper introduces a new type of network game, motivated by the idea of self-management. The algorithm is designed to be implemented in a decentralized manner: local information is used by decision-makers, with minimal communication between them achieved via simple mechanisms. The conceptual framework of self-management assumes that each decision-making agent relies on its local information and acts in order to achieve its local goals, which form a decomposition of the global goal, designated by the system designer.

Unlike prior works, e.g.: \cite{altman2002nash, banner2007bottleneck, shakkottai2007network, karakostas2004edge}, in the proposed game decision-making agents control routers' outgoing interfaces (i.e. each player corresponds to one link, and not to end user or flow, as it is usually assumed in the aforementioned literature). This approach makes our solution prone to the negative effects of users' selfishness, limiting their possibilities of abuse. Instead, the proposed algorithm is designed to be implemented by network operators on their router devices in a fully decentralized manner.

While different routers can be under control of independent organizations (e.g.: network operators, ISP companies), they are not obliged to use the same algorithm for allocating transmission rates. Since each decision-making agent is interested in maximizing only the utility of its own services, the selfishness is apparent. Consequently, when designing decentralized algorithms it is crucial to analyze the system's equilibria. Typically, network routing games (such as congestion games or bottleneck games) utilize the notion of Wardrop equilibrium \cite{correa2011wardrop} as a desirable state of the system. In these games, players may choose between alternative paths to transfer their traffic, in order to minimize delays. Moreover, in routing games it is often assumed that there are infinitely many players. However, the game introduced in this paper models a different type of conflict: flow sources do not choose between alternative paths, as these are predetermined. In contrast, individual path components (i.e. links) choose rate allocations for finitely many flows, and the game outcome is defined in terms of network utility. Furthermore, if one would try to transform our game into a kind of routing game, where instead of individual packets, players are considered as aggregations of packets (e.g. originating from a common source), then one would notice that there is significant difference between the strategy spaces. While in both games strategies can be seen as vectors of nonzero allocations, in our game the sum of elements of this vector has to be no greater than given amount, while in routing games it has to be no less than a given amount. 

For such game a natural notion of stability is a pure Nash equilibrium. In such state no link has an incentive to change the rate allocation, as that would not improve link's utility. The utility, in turn, translates into operators' income for providing transmission services.

Finally, we develop preliminary results concerning the quality of Nash equilibria of the game, as compared to the optimal solution of NUM. This allows us to bound approximation ratio of the presented algorithm for a special network topology (i.e. serial network).

\subsection{Organization of paper}

The paper is organized as follows. Section \ref{sec:related} gives an overview of the related work, including references to the game theoretical literature and studies on relevant network resource allocation problems. Section \ref{sec:cag} consists of two parts. The statement of network utility maximization problem is given in Subsection \ref{sec:problem}. The definition of the capacity allocation game is presented in Subsection \ref{sec:def}. Main results of the paper are contained in Section \ref{sec:alg} including two algorithms for computing strategies, and the proof that the strategy profile computed by the latter algorithm (denoted Algorithm \ref{alg2}) is a pure Nash equilibrium. Section \ref{sec:props} discusses additional properties of the game: Pareto-optimality (Subsection \ref{sec:pareto}), Price of Anarchy and Stability (Subsection \ref{sec:poa}). The general results are based on the analysis of hypothetical strategy profile corresponding to the optimal solution of NUM. Computational study is presented in Section \ref{sec:comp}. Finally, Section \ref{sec:concl} concludes the paper.

\section{Capacity allocation game}\label{sec:cag}

We show how the interaction between concurrent decision-making agents can be modeled as a game. Subsequently, we establish a relationship of the formulated game and the solution of the NUM problem.

\subsection{Network utility maximization problem}\label{sec:problem}

In the considered problem the network consists of a set of $L$ links, each with capacity $c_{l} > 0$, $l \in \{1, \ldots, L \}$. Denote ${\bf c} = [c_1, \ldots, c_L]^T$. There are $R$ flows (users' packet transmissions), defined by a routing matrix ${\bf A} = [a_{lr}]$, where $a_{lr}=1$ if $r$-th flow traverses $l$th link, and $a_{lr}=0$ otherwise. Each flow is characterized by the transmission rate $x_r \geq 0$ (expressed in bits per second). Denote ${\bf x} = [x_1, \ldots, x_R]^T$. For each flow there is an associated utility measure $u_r(x_r)$, which is assumed to be strictly increasing concave and twice-differentiable function of transmission rate (reflecting user's willingness to pay their network operator). 

The network utility maximization problem (NUM) introduced in \cite{kelly1998rate} is defined as follows:
\begin{equation}\label{num1}
	\textrm{maximize} \;\;\; Q({\bf x}) = \sum_{r=1}^R u_r(x_r)
\end{equation}
subject to:
\begin{equation}\label{num1:2}
	{\bf A} {\bf x} \leq {\bf c},
\end{equation}
\begin{equation}\label{num1:3}
	{\bf x} \geq 0.
\end{equation}

In this paper we restrict the choice of utility functions to the so-called {\it isoelastic} functions, that is, to the class of functions\footnote{Symbol $\log x$ denotes the natural logarithm function.}:
\begin{equation}\label{isoel}
	u_r(x_r) = \left\{ \begin{array}{ll}
 				w_r \frac{1}{1 - \gamma} x_r^{1 - \gamma} & \gamma > 0, \gamma \neq 1, \\
				w_r \log x_r	& \gamma = 1.
			\end{array}
			\right.
\end{equation}
It was shown that such class of functions leads to proportionally fair allocations of transmission rates \cite{mo2000fair}, thus is typically employed in the analysis of network resource allocation problems.

In practical instances of interest (especially on the scale of Internet autonomous systems or ISP networks), the number of concurrent flows is very large and there is no central authority capable of managing all transmission rates simultaneously. Therefore we are interested in designing and analyzing decentralized protocols, which solve this problem (or approximates its solution) in a distributed manner. Such protocols are typically implemented as a part of low-level operating system's kernel software; in particular TCP/IP stack includes procedures for flow control and congestion avoidance, available in many implementation-dependent variants. Emerging networking solutions provide more advanced means of rate control, incorporating Quality of Service (QoS) capabilities \cite{gasior2008qos, yun2010qos}.

\subsection{Definition of the game}\label{sec:def}


We introduce the following network game. Each $l$th link in the network is associated with one player. Players must decide on the way they allocate their total capacities $c_l$ among the set of flows traversing their corresponding link. Each player makes a decision individually. The decision of $l$th player, called player's strategy, is denoted ${\bf s}_l = [s_{l1}, s_{l2}, \ldots, s_{lR}]^T$, where $s_{lr}$ is the fraction of $l$th link's capacity allocated for $r$th flow. We restrict the player's choice only to feasible decisions, that is, satisfying $\sum_{r=1}^R a_{lr} s_{lr} \leq c_l$. However, the transmission rate of a single flow is limited by the minimal allocation of some link along the path of that flow (defined by the routing matrix ${\bf A}$). The player's payoff is computed as the value of weighted utility of transmission rates of all the flows passing through the corresponding link. The strategy profile of the game is defined as ${\bf S} = [{\bf s}_1,  \ldots, {\bf s}_L]$. The payoff of $l$th player is given by:
\begin{equation}\label{payoff}
	Q_l({\bf S}) = \sum_{r=1}^R a_{lr} b_r u_r \left( \min_{k : a_{kr}=1} s_{kr} \right)
\end{equation}
where $b_r \geq 0$ is a weight assigned to the $r$th flow.

We define {\it social welfare} as the sum of utilities of all flows:
$$
\mathcal{W}({\bf S}) = \sum_{r=1}^R u_r(\min_{l:a_{lr}=1} s_{lr}).
$$
Observe that this is equal to the value of objective function \eqref{num1} of NUM, if transmission rates $x_r$ are computed as minimum of allocations along paths. We make use of this fact in the analysis of Price of Anarchy and Stability in Subsection \ref{sec:poa}. Moreover, the social welfare is also equivalent to the sum of all payoffs \eqref{payoff} with weights:
\begin{equation}\label{weights1}
	b_r = \frac{1}{\sum_{k=1}^L a_{kr}}.
\end{equation}
Here each $r$-th weight is a reciprocal of length of the path associated with $r$th flow. Such form of weights penalize long flows (proportionally to the number of involved links). 

In this paper two types of game's payoffs are considered: {\it uniform payoffs}, that is, all weights $b_r=1$, and the one with the set of weights defined by \eqref{weights1}, called {\it path length payoffs}.

\section{Algorithms for computing strategies}\label{sec:alg}

In the following subsections we present two algorithms for computing certain feasible strategies. In both cases the computations can be carried out independently by all players, since the only constraints imposed on player strategies ${\bf s}_l$ are local. This makes these algorithms suitable for implementation as decentralized protocols. Next, we show that the first algorithm, although simple to implement, does not guarantee establishing an equilibrium. However, the second one, which can be seen as its extension, always finds a pure strategy strongly Pareto-optimal Nash equilibrium.

\subsection{Local one-step allocation algorithm}


Consider the following algorithm. Each player solves a local concave optimization problem, given as:
\begin{equation}\label{alg1}
	{\bf S}^{I}_l = \arg\max_{{\bf S}_l \in D_l} \sum_{r=1}^R a_{lr} b_r u_r(s_{lr})
\end{equation}
where:
$$
	D_l = \left\{ {\bf s}_l : \sum_{r=1}^R a_{lr} s_{lr} \leq c_l, \; \forall_r \; a_{lr} s_{lr} = s_{lr} \right\}.
$$

For the assumed class of utility functions \eqref{isoel}, given fixed $\gamma$, the solution can be derived analytically, as:
$$
	s_{lr} = a_{lr} c_l \frac{(b_r w_r)^{1/\gamma}}{ \sum_{j=1}^R a_{lj} (b_j w_j)^{1/\gamma}}.
$$

This algorithm is a realization of the simplest rational strategy, which can be computed without any communication between players. Due to this fact, there are no synchronization issues concerning implementation in a networked environment. This approach resembles the ``safe'' algorithm for distributed optimization, given in \cite{papadimitriou1993linear} for solving positive linear programs. Moreover, the computational complexity is low, as constructing such solution boils down to solving concave maximization problem in $n_l = | \{ r \in R$ : $a_{lr} = 1 \} |$ variables (in general, this can be accomplished in polynomial time with the use of interior-point methods \cite{boyd2004convex}; for isoelastic utilities evaluating analytic solutions results in $O(n_l)$ time complexity). 

However, as the following example shows, in general this algorithm does not produce a state of equilibrium; a player may be better off changing its allocation without informing other players. 

\begin{example}
	
Consider two links ($L=2$), the first with capacity $c_1= 10$ and the second with capacity $c_2=100$. There are three flows ($R=3$); first flow passes through both links, while the two other flows use single links, link 1 and link 2, respectively (see Figure \ref{fig:ex1}). All utility functions are assumed to be logarithmic, i.e. $u_1(x) = u_2(x) = u_3(x) = \log x$.

The local algorithm computes the following strategy vectors for both players (links), ${\bf S}^I = ({\bf s}^I_1, {\bf s}^I_2)$: 
$${\bf s}^I_1 = (5, 5, 0),$$

$${\bf s}^I_2 = (50, 0, 50).$$

The payoffs of players (i.e. the local utilities of their corresponding flows) are equal $Q_1({\bf s}^I_1) = \log 5 + \log 5$ and $Q_2({\bf s}^I_2) = \log 5 + \log 50$.

If ${\bf S}^I$ were a Nash equilibrium, no player would have an incentive to unilaterally deviate from this allocation. However, since the capacities in both links are uneven, a fair allocation is suboptimal. The high-speed link with $c_2=100$ should promote the flow 3, as it does not pass through the bottleneck link with $c_1=10$. Thus the following change of player 2 strategy:
$$
	{\bf s}'_2 = (5, 0, 95)
$$
gives a better outcome, $Q_2({\bf s}'_2) = \log 5 + \log 95$.

\end{example}

\begin{figure}[!ht]
\begin{center}
    \includegraphics[width=4in]{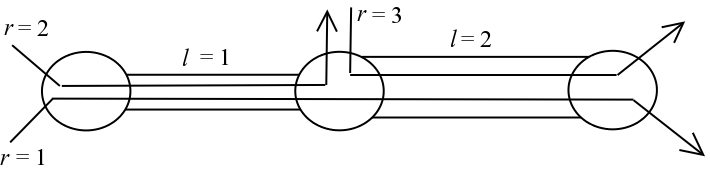}
\end{center}
\caption{Illustration of a simple network setup in Example 1.}\label{fig:ex1}
\end{figure}

\subsection{Iterated allocation algorithm}

The solution computed by the algorithm presented in the previous section can be easily improved, if we allow players to interact in the following way. Initial allocations are computed with the use of the local one-step algorithm. After these allocations are computed, all sources start sending data and transmission rates of all flows gradually increase from zero. The rate of a flow stops accelerating as soon as some link on the path becomes a bottleneck, i.e. the rate reaches minimal allocation of some link along its path. This must be detected with the use of a congestion avoidance algorithm (e.g. as a part of TCP), which notifies the source that the capacity on the path has been exceeded. We call such flow {\it saturated}. This means that it is no longer possible to increase its rate. However, other links on the path of such flow may have unused capacity. Thus it is possible to assign this capacity among the non-saturated flows, increase their rates and repeat that until all the flows become saturated. 

This procedure is summarized as Algorithm \ref{alg2}. Step \ref{alg2:1} requires executing local one-step algorithm (solving problem \eqref{alg1}). In the presented pseudocode, ${\bf S}^{(n)} = [ {\bf s}_l^{(n)}, \ldots, {\bf s}_l^{(n)}]^T$ denotes the strategy vector computed in $n$th iteration. The final strategy vector returned by the algorithm is denoted ${\bf S}^{II}$. The following auxiliary notation is used in the algorithm's description. Denote $\mathcal{R}_l = \{ r : a_{lr} = 1 \}$ the set of flows traversing link $l$. Set $\hat{\mathcal{R}}_n$ denotes all the flows that are not yet saturated in $n$th iteration. Set $\mathcal{L}_n$ denotes all the links that ran out of capacity in $n$th iteration. The smallest index of link that runs out of capacity in $n$th iteration is denoted $\phi(n)$.

The implementation of the presented algorithm needs to be appropriately structured in the networked environment. In particular, step \ref{alg2:3} is executed concurrently by all links $l \in \{ 1, \ldots, L \}$ using local information, while step \ref{alg2:2} can be seen as ``coordinating'' phase, in which saturated flows are detected and removed from further considerations (this is also achieved locally by each link, by detecting whether a flow stopped increasing its rate).

\begin{algorithm}

	\caption{Iterated Allocation Algorithm}\label{alg2}

	\begin{algorithmic}[1]
	
		\State ${\bf S}^{(1)} \leftarrow {\bf S}^I$ \label{alg2:1}
		
		\State $\hat{\mathcal{R}}_1 \leftarrow  \{ 1, \ldots, R \}$
		
		\State $\mathcal{L}_1 \leftarrow \left\{ l : \sum_{r=1}^R a_{lr} \left( \min_{k: a_{kr}=1} s_{kr}^{(1)} \right) = c_l \right\}$
		
		\State $n \leftarrow 2$
		
		\While{$n \leq L$}
		
		\State $\hat{\mathcal{R}}_n \leftarrow \hat{\mathcal{R}}_{n-1} \setminus \mathcal{R}_{\phi(n-1)},$
		
		where:
		$$\phi(i) := \min \mathcal{L}_i,$$
		$$\mathcal{L}_i =
									\left\{ l : \sum_{r=1}^R a_{lr} \left( \min_{k: a_{kr}=1} s_{kr}^{(i)} \right) = c_l \right\}
		$$ \label{alg2:2}
		
		\If {$\hat{\mathcal{R}}_n = \emptyset$}
			\State ${\bf S}^{II} \leftarrow {\bf S}^{(n)}$
			\State \Return ${\bf S}^{II}$
		\EndIf
		
		\State determine ${\bf S}^{(n)}$: 
		$${\bf s}_l^{(n)} = \arg\max_{\bf{s}_l \in D_l^n} \sum_{r \in \hat{\mathcal{R}}_n} a_{lr} b_r u_{r} (s_{lr})$$

		where:
		$$
			D_l^n = \left\{ {\bf s}_l : \sum_{r=1}^R a_{lr} s_{lr} \leq c_l, \; \forall_r \; 0 \leq s_{lr} \leq a_{lr} s_{lr},
			\right.
		$$
		$$
			\left. \forall_{r \in \bigcup_{i=1}^{n-1} \hat{\mathcal{R}}_{\phi(i)}} \; s_{lr} = \min_{k: a_{kr} = 1} s_{kr}^{(n-1)} \right\}
		$$ 
		\label{alg2:3}

		\State $n \leftarrow n + 1$
		
		\EndWhile
		
		\State ${\bf S}^{II} \leftarrow {\bf S}^{(n)}$
		
		\State\Return ${\bf S}^{II}$

	\end{algorithmic}
\end{algorithm}

The correctness of Algorithm \ref{alg2} follows from the following fact:

\begin{proposition}
	For isoelastic utility functions the set
	$$\mathcal{L}_i = \left\{ l : \sum_{r=1}^R a_{lr} \left( \min_{k: a_{kr}=1} s_{kr}^{(i)} \right) = c_l \right\}$$ 
	is nonempty for all $i$, such that $\hat{\mathcal{R}}_i \neq \emptyset$.
\end{proposition}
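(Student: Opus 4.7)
The plan is to exhibit an explicit link that belongs to $\mathcal{L}_i$, namely the one whose residual capacity is tightest relative to the weighted demand of its unsaturated flows.

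First I would invoke strict monotonicity of the isoelastic utility functions \eqref{isoel} to show that at iteration $i$ every link $l$ carrying at least one unsaturated flow (i.e.\ $a_{lr}=1$ for some $r \in \hat{\mathcal{R}}_i$) must exhaust its entire capacity: $\sum_r a_{lr} s_{lr}^{(i)} = c_l$. Otherwise link $l$ could strictly increase $s_{lr}$ for some unsaturated $r$ and improve its payoff, contradicting the optimality required in step \ref{alg2:3}. Writing the KKT conditions for that local problem with a Lagrange multiplier for the capacity constraint, I would obtain the closed form $s_{lr}^{(i)} = \tilde{\alpha}_l^{(i)} (b_r w_r)^{1/\gamma}$ for every $r \in \hat{\mathcal{R}}_i$ with $a_{lr}=1$, where $\tilde{\alpha}_l^{(i)}$ is a per-link constant determined by the residual capacity $c_l - \sum_{r\ \text{saturated}} a_{lr} x_r$ and by the weights $b_r w_r$ of the unsaturated flows through $l$.

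Next I would let $l^* = \arg\min_l \tilde{\alpha}_l^{(i)}$, taken over all links with at least one unsaturated flow; this set is nonempty because $\hat{\mathcal{R}}_i \neq \emptyset$. To verify $l^* \in \mathcal{L}_i$ it suffices to show $s_{l^* r}^{(i)} = \min_{k:\,a_{kr}=1} s_{kr}^{(i)}$ for every $r$ with $a_{l^* r}=1$, which I would split into two cases. For a saturated $r$, the equality constraints inside $D_{l^*}^n$ pin $s_{kr}^{(i)}$ to the common value $\min_k s_{kr}^{(i-1)}$ at every link $k$ on $r$'s path, so $l^*$ trivially attains the minimum. For $r \in \hat{\mathcal{R}}_i$, every link $k$ on $r$'s path carries an unsaturated flow (namely $r$), hence $\tilde{\alpha}_k^{(i)}$ is defined and, by the minimality of $\tilde{\alpha}_{l^*}^{(i)}$, satisfies $\tilde{\alpha}_k^{(i)} \geq \tilde{\alpha}_{l^*}^{(i)}$; the closed form then yields $s_{kr}^{(i)} \geq s_{l^* r}^{(i)}$. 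Combining both cases with the full-capacity identity above gives $\sum_r a_{l^* r} \min_k s_{kr}^{(i)} = \sum_r a_{l^* r} s_{l^* r}^{(i)} = c_{l^*}$, so $l^* \in \mathcal{L}_i$.

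The main technical obstacle I expect is the bookkeeping between saturated and unsaturated flows inside the feasible sets $D_l^n$: one must verify that the residual capacities stay strictly positive so that $\tilde{\alpha}_l^{(i)}$ is well-defined, that links carrying only saturated flows do not need to be candidates for $l^*$, and, by a short induction on iterations, that for every previously saturated $r$ the value $\min_k s_{kr}^{(i-1)}$ frozen inside $D_l^n$ is genuinely equal to the prevailing transmission rate of $r$ at iteration $i$. Aside from this accounting, the argument reduces to a clean combination of strict monotonicity of the payoff with the KKT structure specific to the isoelastic utilities \eqref{isoel}.
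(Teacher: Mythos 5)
Your proof is correct, but it takes a genuinely different route from the paper's. The paper argues by contradiction: assuming $\mathcal{L}_i=\emptyset$, it builds a chain of links $m_1,m_2,\ldots$ in which each link contains a flow bottlenecked elsewhere, closes the chain into a cycle by finiteness, and then uses the proportionality $s_{lr}\propto v_r=(b_rw_r)^{1/\gamma}$ to derive $x_{r_1}=\bigl(\prod_i\alpha_i\bigr)x_{r_1}$ with every $\alpha_i<1$, a contradiction. You instead give a direct, constructive argument: from the same proportionality you write $s_{lr}^{(i)}=\tilde{\alpha}_l^{(i)}v_r$ for unsaturated flows, and exhibit the witness $l^*=\argmin_l\tilde{\alpha}_l^{(i)}$, showing it attains the path minimum for every flow it carries (trivially for saturated flows, which are frozen at a common value along their paths; by minimality of $\tilde{\alpha}_{l^*}^{(i)}$ for unsaturated ones) and is capacity-exhausting by strict monotonicity of \eqref{isoel} in step~\ref{alg2:3}. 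Both proofs rest on the same structural fact about isoelastic optima, but yours buys more: it actually identifies a member of $\mathcal{L}_i$ (so $\phi(i)$ could be taken as the link with the smallest per-weight allocation level), and it sidesteps the somewhat delicate bookkeeping in the paper's chain construction, where one must justify that every non-full link contains a flow saturated by a \emph{later} link and that the resulting cycle is consistent. The only points you should make explicit in a full write-up are the ones you already flag: positivity of the residual capacities (guaranteed since $u_r'(0^+)=+\infty$ keeps all unsaturated allocations interior), and the short induction showing the frozen values in $D_l^n$ coincide with the actual bottleneck rates.
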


\begin{proof}
	The proof is by contradiction. Assume that at $n$th iteration of the Algorithm \ref{alg2} there is no link with the capacity completely filled. 
	Since the set $\hat{\mathcal{R}}_n \neq \emptyset$, this implies the existence of sequence of links, constructed as follows. In $m_1$th link there is a flow, which is saturated by link $m_2$. In link $m_2$ there must be some spare capacity, that is, there must be a flow, which is saturated by link $m_3$. Continuing this reasoning, we conclude that there is some link $m_k$, which must belong to the set $\{ m_1, m_2, \ldots, m_{k-1} \}$, since there is only a finite number of links (otherwise $m_k$ must be completely filled, as no other link prevents the increase of its allocation). Without the loss of generality, let $m_k = m_1$, and $r_k = r_1$.
	
	Suppose link $m_1$ saturates the flow $r_1$, that is:
	$$
		s_{m_1r_1} = \min_{k: a_{kr_1} = 1} s_{kr_1} = x_{r_1},
	$$
	and suppose flow $r_2$, which also passes through link $m_1$, is saturated by the link $m_2$. Similarly, link $m_2$ saturates the flow $r_2$, and contains the flow $r_3$, saturated in the link $m_3$, etc. The last of the considered flows, $r_{k-1}$, is saturated in the link $m_{k-1}$, which contains the flow $r_1$.

	The class of isoelastic functions \eqref{isoel} has the property that optimal allocations are proportional to the weights $v_r = (b_rw_r)^{1/\gamma}$. Thus $r$th flow on $l$th link gets the share of capacity $c_l$ equal to $v_{r}$.
	
	The allocations of flows $r_1$ and $r_2$ in link $m_1$ can be written as:
	$$
		s_{m_1r_1} = \frac{v_{r_1} c_{m_1}}{\sum_{k: a_{m_1k} = 1} a_{m_1k} v_k},
	$$
	$$
		s_{m_1r_2} = \frac{v_{r_2} c_{m_1}}{\sum_{k: a_{m_1k} = 1} a_{m_1k} v_k},
	$$
	and consequently:
	$$
		\frac{s_{m_1r_2}}{s_{m_1r_1}} = \frac{v_{r_2}}{v_{r_1}}.
	$$
	Since the flow $r_2$ is not saturated in link $m_1$, there exists a constant $0 < \alpha_1 < 1$, such that:
	$$
		x_{r_2} = \alpha_1 \frac{v_{r_2}}{v_{r_1}} x_{r_1}.
	$$
	Similarly, the allocation ratio of flows $r_2$ and $r_3$ in link $m_2$ can be written as:
	$$
		\frac{s_{m_2r_3}}{s_{m_2r_2}} = \frac{v_{r_3}}{v_{r_2}},
	$$
	and there exists a constant $0 < \alpha_2 < 1$, such that:
	$$
		x_{r_3} = \alpha_2 \frac{v_{r_3}}{v_{r_2}} x_{r_2} = \alpha_2 \alpha_1 \frac{v_{r_3}}{v_{r_1}} x_{r_1}.
	$$
	Continuing this reasoning, we reach the link $m_{k-1}$, in which the allocation ratio is:
	$$
		\frac{s_{m_{k-1}r_1}}{s_{m_{k-1}r_{k-1}}} = \frac{v_{r_1}}{v_{r_{k-1}}},
	$$
	and consequently:
	$$
		x_{r_1} = \left( \prod_{i=1}^{k-1} \alpha_i \right) x_{r_1}.
	$$
	Since the product of $\alpha_i$s is less than $1$, we get a contradiction.
	
\end{proof}

\begin{remark}\label{rem1}
	Strategy ${\bf S}^{II}$ has the property that all link allocations along the path of any flow are equal to the minimum allocation for that flow, i.e.:
	$$
		\forall_r \; \forall_{l : a_{lr} = 1} \; s_{lr}^{II} = \min_{k : a_{kr}=1} s_{kr}^{II}.
	$$
\end{remark}

The algorithm executes no more than $L$ iterations of its main loop. In result, a strategy profile ${\bf S}^{II}$ is returned. Moreover, the strategy ${\bf S}^{II}$ computed by Algorithm \ref{alg2} dominates the strategy computed by local one-step algorithm. 

In each iteration the algorithm completely fills the capacity of a subset of links. However, observe that in each iteration exactly one link (with the smallest index in such subset) is removed from the further consideration. Thus without the loss of generality, it is always possible to renumber the indices of links in such a way, that the order of removed links matches the iteration number, i.e.: $\phi(n) = n$ for all $n=1, \ldots, L$. 

Henceforth, we assume that the links have been renumbered this way, prior to the execution of the algorithm.

\begin{figure}[!ht]
\begin{center}
    \includegraphics[width=3.8in]{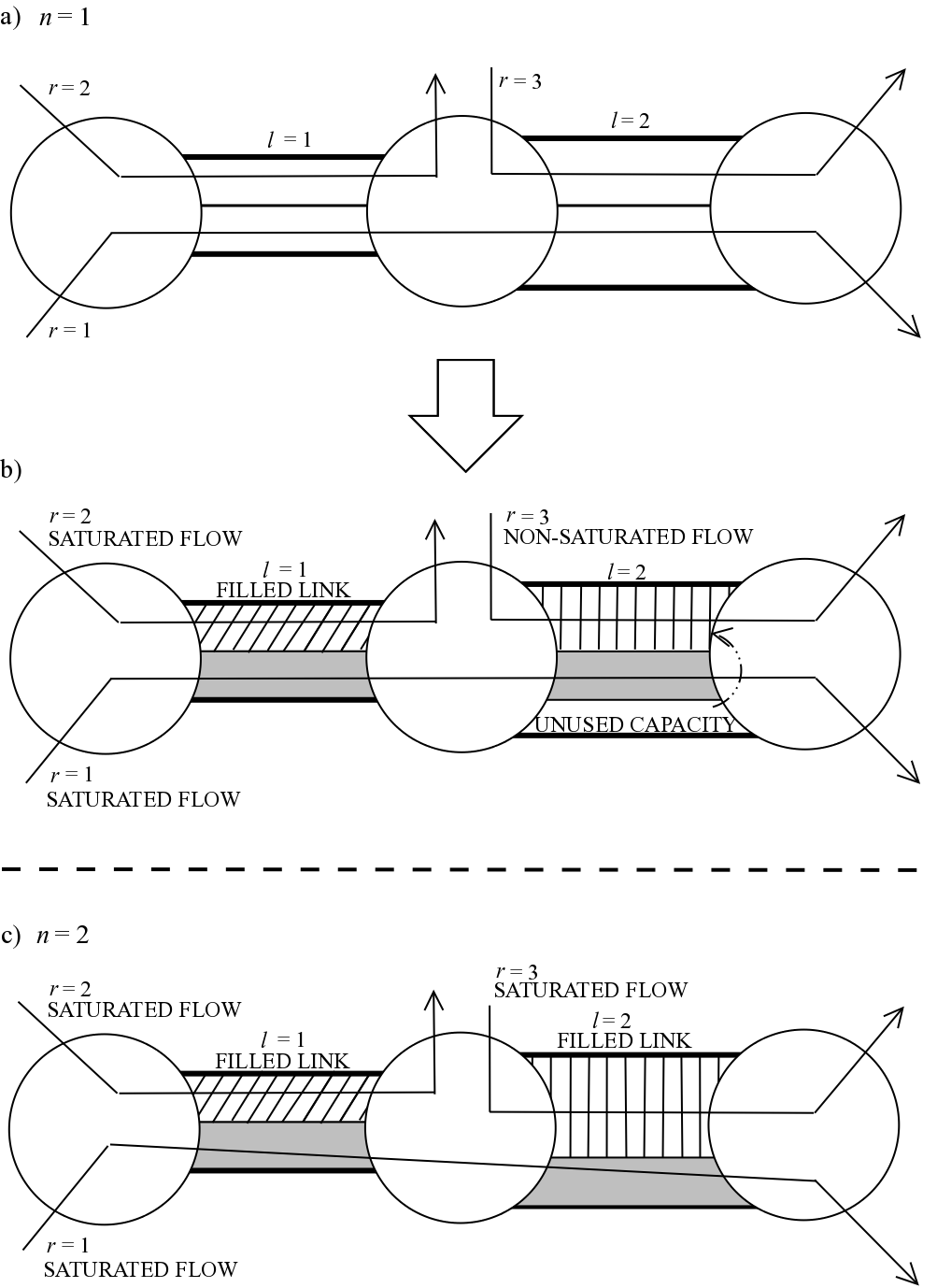}
\end{center}
\caption{The idea of Algorithm \ref{alg2}.}\label{fig:example_dg}
\end{figure}

To illustrate the idea of Algorithm \ref{alg2}, let us consider the following example.

\begin{example}

Assume the same data as in Example 1 (including network structure, number of flows, routing matrix, links' capacities, utility functions) and the game with uniform payoffs.
At the beginning of the execution of Algorithm \ref{alg2} all flows are not saturated, $\hat{\mathcal{R}}_1 = \{1, 2, 3\}$, so each link allocates capacities maximizing its own objective, and the players' strategies are the same as in Example 1, i.e.:
$${\bf s}^{(1)}_1 = (5, 5, 0),$$ 

$${\bf s}^{(1)}_2 = (50, 0, 50),$$
\noindent
as it is depicted in Figure \ref{fig:example_dg}a.
Once all strategies are computed, the sources may transmit data at rates:

$${\bf x} = (5, 5, 50).$$

It is clear that only first link becomes filled ($\mathcal{L}_1 = \{1\}$), and the first and the second flows are saturated (which means that only the third one is not saturated, $\hat{\mathcal{R}}_2 = \{3\}$) as it is shown in Figure \ref{fig:example_dg}b. Consequently, there is unused capacity in the second link, which may be used in the next step.

Now, for $n = 2$ the procedure is illustrated in Figure \ref{fig:example_dg}c. The second link does not change its strategy (allocation) since it is filled, i.e.:
$${\bf s}^{(2)}_1 = (5, 5, 0).$$ 

The second link sets allocations for saturated flows equal to their current transmission rates (i.e. minimal capacity allocated for these flows on their paths) and calculates the allocations for non-saturated flows (in this case for $r = 3$) to maximize its objective. Finally, its strategy is given by:
$${\bf s}^{(2)}_2 = (5, 0, 95).$$

In consequence, only the third source may increase its rate, resulting in rates:
$${\bf x} = (5, 5, 95).$$

Since both stopping conditions are met in the next iteration, i.e. $n = 3 > L = 2$ and $\hat{\mathcal{R}}_2 = \emptyset$, the execution ends with the following players' strategies:
$${\bf s}^{(II)}_1 = (5, 5, 0),$$ 

$${\bf s}^{(II)}_2 = (5, 0, 95).$$

It is easy to check that these strategies constitute a Nash equilibrium.

\end{example}

\subsection{Pure strategy Nash equilibrium}

Observe that utility functions in the considered game are concave (as a composition of isoelastic function and minimum function). From the Rosen's theorem on concave $L$-person games \cite{rosen1965existence} at least one pure strategy Nash equilibrium always exists in this game. It is however not obvious how to compute one efficiently. In this paper we give a polynomial time algorithm which always finds an equilibrium. Our main result is stated as the following theorem:

\begin{theorem}\label{thm:nash}
	A pure strategy Nash equilibrium in capacity allocation game can be computed by Algoritm \ref{alg2} as strategy profile ${\bf S}^{II}$.
\end{theorem}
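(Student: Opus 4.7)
The plan is to verify that for every link $l$, the strategy $\mathbf{s}^{II}_l$ is a best response to $\mathbf{S}^{II}_{-l}$. Remark~\ref{rem1} is the key preliminary: in $\mathbf{S}^{II}$, every link on the path of a flow $r$ allocates exactly $m_r := \min_{k:a_{kr}=1} s^{II}_{kr}$. Hence, if $l$ deviates to $\mathbf{s}'_l$, the new rate of a flow $r$ through $l$ equals $\min(s'_{lr}, m_r)$ when $r$ has at least one other link on its path (call such a flow \emph{multi}), and equals $s'_{lr}$ when $l$ is the only link on $r$'s path (\emph{solo}). Raising $s'_{lr}$ above $m_r$ on a multi flow only wastes capacity, so I may assume $s'_{lr} \leq m_r$ for every multi $r$.

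Setting $\delta_r := s'_{lr} - m_r$ and using concavity of $u_r$ gives the linear upper bound
\[
Q_l(\mathbf{s}'_l,\mathbf{S}^{II}_{-l}) - Q_l(\mathbf{S}^{II}) \;\leq\; \sum_r a_{lr}\, b_r u'_r(m_r)\, \delta_r.
\]
Moreover $\sum_r a_{lr}\delta_r \leq 0$: if $l$ is saturated inside the algorithm this follows from $\sum_r a_{lr} m_r = c_l$ and feasibility of the deviation, while if $l$ never saturates then every flow through $l$ is multi (a solo flow would force $l$ to saturate) and the WLOG reduction makes $\delta_r \leq 0$ termwise. So it suffices to exhibit $\lambda_l \geq 0$ with $b_r u'_r(m_r) = \lambda_l$ for every solo $r$ through $l$ and $b_r u'_r(m_r) \geq \lambda_l$ for every multi $r$ through $l$: the coefficientwise inequality $[b_r u'_r(m_r) - \lambda_l]\delta_r \leq 0$ (from $\delta_r \leq 0$ on multi flows) chained with $\sum_r a_{lr}\delta_r \leq 0$ then yields $Q_l(\mathbf{s}'_l,\mathbf{S}^{II}_{-l}) \leq Q_l(\mathbf{S}^{II})$.

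The natural candidate is the Lagrange multiplier of link $l$'s iteration-$l$ subproblem inside Algorithm~\ref{alg2}. For isoelastic utilities the closed form gives $s^{(i)}_{kr} = v_r/\sigma^{(i)}_k$ with $v_r := (b_rw_r)^{1/\gamma}$, where $\sigma^{(i)}_k$ is the total $v$-mass of flows still unsaturated at $k$ divided by the remaining capacity of $k$ at the start of iteration $i$. Setting $\lambda_i := (\sigma^{(i)}_i)^\gamma$ one obtains $b_r u'_r(m_r) = \lambda_{k^*_r}$ for every flow $r$, where $k^*_r$ denotes the (renumbered) index of the link that saturates $r$. Solo flows through $l$ and multi flows whose first path link is $l$ both have $k^*_r = l$, so their marginal utility equals $\lambda_l$ exactly. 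The remaining multi flows have $k^*_r < l$, and the theorem reduces to proving $\lambda_{k^*_r} \geq \lambda_l$ whenever a flow traverses both $k^*_r$ and $l$. I expect this cross-iteration comparison to be the main technical obstacle.

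I would close the monotonicity gap with two facts about $\sigma^{(i)}_k$. (i) At the iteration $i$ when link $i$ is first to saturate, the equation $\sum_r a_{ir}\min_k s^{(i)}_{kr} = c_i$ forces $\sigma^{(i)}_i \geq \sigma^{(i)}_k$ for every link $k$ on the path of some newly saturated flow; in particular, if flow $r$ passes through both $i$ and $l$ and is saturated at iteration $i$, then $\sigma^{(i)}_l \leq \sigma_i$. (ii) The sequence $\sigma^{(i)}_l$ is non-increasing in $i$: when flows of total $v$-mass $\Delta V$ through $l$ are saturated at iteration $i$ (each at rate $v_r/\sigma_i$), a short algebraic check shows $\sigma^{(i+1)}_l \leq \sigma^{(i)}_l$ precisely when $\sigma_i \geq \sigma^{(i)}_l$, which is exactly (i). Chaining (ii) from $i = k^*_r$ down to $i = l$ and invoking (i) at $i = k^*_r$ yields $\sigma_l \leq \sigma_{k^*_r}$, equivalently $\lambda_l \leq \lambda_{k^*_r}$, completing the proof.
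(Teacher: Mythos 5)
Your argument is correct, but it reaches the conclusion by a genuinely different route than the paper. The paper first prunes the space of profitable deviations via Lemma~\ref{lem2} and Lemma~\ref{lem3} (no saturated flow gains capacity, no unsaturated flow loses it), and then applies the pairwise-exchange Lemma~\ref{lem4}: any remaining deviation decomposes into elementary transfers of $\delta$ from a saturated to an unsaturated flow, each of which can only degrade the payoff because the deviated allocation is already tilted the wrong way relative to the two-flow optimum. You instead certify that ${\bf s}_l^{II}$ is a best response by a first-order/KKT argument: the concavity bound $Q_l({\bf s}_l',{\bf S}^{II}_{-l})-Q_l({\bf S}^{II})\le\sum_r a_{lr}b_ru_r'(m_r)\delta_r$, the feasibility bound $\sum_r a_{lr}\delta_r\le 0$, and an explicit shadow price $\lambda_l=\sigma_l^{\gamma}$ with $b_ru_r'(m_r)=\lambda_{k_r^*}\ge\lambda_l$. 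The crux of your version --- monotonicity of the per-link prices $\sigma_l^{(i)}$ across iterations, anchored by $\sigma_i\ge\sigma_k^{(i)}$ at each saturation event --- is the dual counterpart of the paper's Lemma~\ref{lem1} (primal allocations of unsaturated flows are nondecreasing across iterations); I checked your facts (i) and (ii) and they hold, the key observation for (i) being that at a completely filled link every per-link allocation coincides with the path minimum. Your route buys two things: it bypasses the analogue of Lemma~\ref{lem3} entirely (local flows and flows saturated at iteration $l$ have coefficient exactly zero, so their $\delta_r$ may have either sign), and it exhibits an explicit equilibrium certificate. Its cost is heavier reliance on the isoelastic closed form $s_{kr}^{(i)}=v_r/\sigma_k^{(i)}$; the identity $b_ru_r'(m_r)=\lambda_{k_r^*}$ could be recovered for general concave utilities from the KKT conditions of the iteration-$k_r^*$ subproblem, but as written the argument is specific to \eqref{isoel}. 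Two small wording fixes: ``multi flows whose first path link is $l$'' should read ``multi flows still unsaturated at the start of iteration $l$'' (equivalently, whose earliest completely filled path link, under the paper's renumbering, is $l$), and you should state explicitly that $\sum_r a_{lr}m_r=c_l$ at a filled link (which follows from the minima summing to $c_l$ and each allocation dominating its minimum) before invoking feasibility of the deviation.
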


Before we prove this theorem, we need the following lemmas.

\begin{lemma}\label{lem1}
	Let $f_m : \mathbb{R}_+ \cup \{ 0 \} \rightarrow \mathbb{R}$, $m = 1, \ldots, M$, be continuous, strictly increasing, strictly concave and twice-differentiable functions.
	
	For any $\beta \geq 0$, let us define a vector-valued function $\boldsymbol{\alpha}^* : \mathbb{R}_+ \cup \{ 0 \} \rightarrow \mathbb{R}^M$:
	$$\boldsymbol{\alpha}^*(\beta) = \arg \max_{\boldsymbol{\alpha}} \left\{ \sum_{m=1}^M f_m (\alpha_m) : \sum_{m=1}^M \alpha_m = \beta \right\}, $$
	where $\boldsymbol{\alpha} = [ \alpha_1, \ldots, \alpha_M]^T$.
	
	Each component $\alpha_m^*(\beta)$ is nondecreasing.
	
\end{lemma}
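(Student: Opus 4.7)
The plan is to characterize the unique maximizer through KKT conditions and then derive a contradiction from the monotonicity of the associated shadow price. First, for each $\beta\geq 0$ the feasible set $\{\boldsymbol{\alpha}\geq 0 : \sum_{m}\alpha_m=\beta\}$ is a compact simplex and the objective is continuous and strictly concave, so $\boldsymbol{\alpha}^*(\beta)$ is uniquely defined. Forming the Lagrangian with multiplier $\lambda$ for the equality constraint and standard multipliers for the bounds $\alpha_m\geq 0$, the KKT conditions yield a unique $\lambda(\beta)$ such that for each $m$:
$$
f'_m(\alpha_m^*(\beta))=\lambda(\beta) \;\text{ if }\; \alpha_m^*(\beta)>0, \qquad f'_m(0)\leq \lambda(\beta) \;\text{ if }\; \alpha_m^*(\beta)=0,
$$
where $f'_m(0)$ denotes the right derivative, which exists by assumption. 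Strict concavity of $f_m$ means $f'_m$ is strictly decreasing on $[0,\infty)$, a fact I will use repeatedly.

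Next, fix $\beta_1<\beta_2$ with optimizers $\boldsymbol{\alpha}^1,\boldsymbol{\alpha}^2$ and multipliers $\lambda_1,\lambda_2$, and assume for contradiction that some coordinate satisfies $\alpha_m^1>\alpha_m^2$. Because $\sum_k \alpha_k^1=\beta_1<\beta_2=\sum_k \alpha_k^2$, there must also exist an index $m'$ with $\alpha_{m'}^1<\alpha_{m'}^2$. For the index $m$ we have $\alpha_m^1>0$, so $\lambda_1=f'_m(\alpha_m^1)$; at $\beta_2$, either $\alpha_m^2>0$ and $\lambda_2=f'_m(\alpha_m^2)$, or $\alpha_m^2=0$ and $\lambda_2\geq f'_m(0)$. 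Since $f'_m$ is strictly decreasing and $\alpha_m^2<\alpha_m^1$, both branches give $\lambda_2>f'_m(\alpha_m^1)=\lambda_1$. An entirely symmetric argument at index $m'$, exploiting $\alpha_{m'}^2>\alpha_{m'}^1\geq 0$ and strict monotonicity of $f'_{m'}$, yields $\lambda_1>\lambda_2$, contradicting $\lambda_2>\lambda_1$. Hence no such $m$ exists, proving $\alpha_m^*(\beta_1)\leq \alpha_m^*(\beta_2)$ for every $m$.

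The main technical point to watch is the corner case $\alpha_m^2=0$, which I handle by invoking the right-derivative form of the KKT stationarity condition rather than interior first-order conditions; strict concavity and twice-differentiability of $f_m$ guarantee that $f'_m$ is strictly decreasing on the closed half-line $[0,\infty)$, so the strict inequality on the multipliers is preserved across this boundary. No regularity argument on $\lambda(\beta)$ as a function of $\beta$ is needed; the proof closes as soon as both strict inequalities on $\lambda_1,\lambda_2$ are established.
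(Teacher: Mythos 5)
Your proof is correct, and it takes a genuinely different route from the paper's. The paper also starts from the KKT conditions, but only imposes the equality constraint: it inverts the strictly decreasing derivative to write $\alpha_m^* = (f_m')^{-1}(\mu^*)$, observes that $G(\mu^*) = \sum_{m}(f_m')^{-1}(\mu^*) = \beta$ defines a strictly decreasing relation between $\mu^*$ and $\beta$, and concludes that increasing $\beta$ decreases $\mu^*$ and hence increases every $\alpha_m^*$ simultaneously. You instead run a two-index exchange argument by contradiction: if some coordinate decreased as $\beta$ grew, the budget identity would force another coordinate to increase, and comparing the stationarity conditions at those two indices yields $\lambda_2 > \lambda_1$ and $\lambda_1 > \lambda_2$ at once. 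Your version is the more careful of the two on one point: by carrying the multipliers for the bounds $\alpha_m \geq 0$ and using the inequality form $f_m'(0) \leq \lambda$ at inactive coordinates, you cover the corner case $\alpha_m^*(\beta) = 0$, which the paper's argument silently assumes away (its relation $\alpha_m^* = (f_m')^{-1}(\mu^*)$ presumes an interior stationary point for every $m$). What the paper's route buys in exchange is a slightly more quantitative picture -- an explicit functional dependence of each $\alpha_m^*$ on $\beta$ through the shadow price -- whereas your argument establishes only the monotonicity, which is all the lemma asserts.
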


\begin{proof}
	Consider the following optimization problem:
	$$
		\textrm{maximize} \;\; \sum_{m=1}^M f_m(\alpha_m)
	$$
	subject to:
	$$
		\sum_{m=1}^M \alpha_m = \beta.
	$$
	The Lagrange function of the problem is:
	$$
		L(\boldsymbol{\alpha}, \mu) = \sum_{m=1}^M f_m(\alpha_m) + \mu \left( \beta - \sum_{m=1}^M \alpha_m \right).
	$$
	Under the assumptions on functions $f_m$, the problem is concave, thus the KKT conditions \cite{boyd2004convex} imply, that the optimal solution $\boldsymbol{\alpha}^*$, $\mu^*$ satisfies:
	\begin{equation}\label{lem1:1}
		\frac{\partial L(\boldsymbol{\alpha}^*, \mu^*)}{\partial \alpha_m^*} = f'_m(\alpha_m^*) - \mu^* = 0,
	\end{equation}	
	\begin{equation}\label{lem1:2}
		\frac{\partial L(\boldsymbol{\alpha}^*, \mu^*)}{\partial \mu^*} = \beta - \sum_{m=1}^M \alpha_m^* = 0.
	\end{equation}
	
	Since function $f_m$ is twice differentiable and strictly concave, thus the function $f''_m(\alpha_m) < 0$ for all $\alpha_m > 0$. This implies that function $f'_m$ is strictly decreasing and continuous. Hence there exists the inverse function $(f'_m)^{-1}$, which is also strictly decreasing.
	
	From \eqref{lem1:1} the solution must satisfy:
	\begin{equation}\label{lem1:3}
		\alpha^*_m = (f'_m)^{-1}(\mu^*).
	\end{equation}
	Substituting $\alpha^*_m$, for $m=1,\ldots,M$ into \eqref{lem1:2} we obtain:
	\begin{equation}\label{lem1:4}
		\sum_{m=1}^M (f'_m)^{-1}(\mu^*) = \beta.
	\end{equation}

	A sum of strictly decreasing functions is also strictly decreasing, thus the lefthand side of \eqref{lem1:4}, denoted $G(\mu^*) = \sum_{m=1}^M (f'_m)^{-1}(\mu^*)$ is strictly decreasing function of $\mu^*$. Given two values: $\beta_1 = G(\mu^*_1)$ and $\beta_2 = G(\mu^*_2)$, relation $G(\mu^*_1) < G(\mu^*_2)$ implies that $\mu^*_1 > \mu^*_2$. For decreasing $\mu^*$, the value of function $(f'_m)^{-1}$ increases. From relation \eqref{lem1:3}, the corresponding solution $\alpha_m^*$ increases with the increase of $\beta$, for all $m=1, \ldots, M$.
	
\end{proof}

\begin{lemma}\label{lem2}
	Given the strategy ${\bf S}^{II}$ computed by Algorithm \ref{alg2}, consider all strategies ${\bf S}' \neq {\bf S}^{II}$, constructed as follows. For any fixed player $l$, let:
	$$
		s'_{lr} = s_{lr}^{II} + \epsilon_{lr},
	$$
	where each $\epsilon_{lr}$ is any real number. Other players' strategies fulfill: ${\bf S}'_{-l} = {\bf S}^{II}_{-l}$.
	
	For any such strategy ${\bf s}_l'$, if there exists $r \notin \hat{\mathcal{R}}_l$, such that $\epsilon_{lr} > 0$, then there exists a strategy ${\bf s}_l''$, such that:
	$$
		\forall_{r \notin \hat{\mathcal{R}}_l} \; \epsilon_{lr} \leq 0,
	$$
	giving exactly the same payoff.
\end{lemma}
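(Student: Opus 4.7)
The plan is to construct the strategy ${\bf s}''_l$ explicitly by zeroing out precisely those $\epsilon_{lr}$ that are positive and correspond to already-saturated flows, and then verify that the payoff from every individual flow through link $l$ is unchanged. The key fact I would lean on is Remark \ref{rem1}: under ${\bf S}^{II}$, for every flow $r$ and every link $l$ on its path, $s_{lr}^{II}$ equals the minimum allocation along that path. Combined with the renumbering convention $\phi(n) = n$, this tells me that a flow $r \notin \hat{\mathcal{R}}_l$ was saturated by some link with index strictly smaller than $l$, and hence $s_{kr}^{II} = s_{lr}^{II}$ for all links $k$ on $r$'s path, including a link $k \neq l$ that is the true bottleneck.

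The construction I have in mind is: for $r \notin \hat{\mathcal{R}}_l$ with $\epsilon_{lr} > 0$, set $s''_{lr} = s_{lr}^{II}$ (i.e.\ $\epsilon''_{lr} = 0$); for every other $r$, set $s''_{lr} = s'_{lr}$. This ${\bf s}''_l$ trivially satisfies the condition $\epsilon''_{lr} \leq 0$ for all $r \notin \hat{\mathcal{R}}_l$, and because we only decreased (or kept) the allocations, feasibility with respect to $\sum_r s''_{lr} \leq c_l$ is preserved whenever it held for ${\bf s}'_l$.

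The payoff check is the core of the argument and splits into two cases. For a flow $r$ whose allocation was not modified by the construction, nothing changes, since ${\bf S}''_{-l} = {\bf S}'_{-l}$ and $s''_{lr} = s'_{lr}$, so $\min_{k: a_{kr}=1} s''_{kr} = \min_{k: a_{kr}=1} s'_{kr}$. For a modified flow $r \notin \hat{\mathcal{R}}_l$ with original $\epsilon_{lr} > 0$, Remark \ref{rem1} gives $\min_{k \neq l,\; a_{kr}=1} s_{kr}^{II} = s_{lr}^{II}$, so under ${\bf s}'_l$ the effective rate of $r$ is $\min(s_{lr}^{II}+\epsilon_{lr},\, s_{lr}^{II}) = s_{lr}^{II}$, and under ${\bf s}''_l$ it is $\min(s_{lr}^{II},\, s_{lr}^{II}) = s_{lr}^{II}$ as well. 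Hence the contribution $b_r u_r(\cdot)$ to $Q_l$ is identical under both strategies, and summing over $r$ yields the same payoff.

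The main obstacle, as I see it, is not any computation but rather aligning the notation cleanly: one has to use the renumbering convention to conclude that $r \notin \hat{\mathcal{R}}_l$ really means that $r$ is bottlenecked by a link other than $l$, and then invoke Remark \ref{rem1} at precisely the right moment so that the ``wasted'' positive increments $\epsilon_{lr}$ can be shown to change neither the minimum along $r$'s path nor therefore player $l$'s payoff.
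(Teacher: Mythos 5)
Your proposal is correct and follows essentially the same route as the paper's proof: the identical construction (reset only the saturated flows with $\epsilon_{lr}>0$ back to $s_{lr}^{II}$, keep everything else), the same feasibility observation, and the same key step that the positive increments are absorbed by the minimum over the path since the bottleneck of a saturated flow lies on an earlier link. The paper phrases the last step via the link-renumbering convention rather than citing Remark \ref{rem1} explicitly, but the underlying fact is the same.
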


\begin{proof}

We give a constructive proof. Let us denote the set of indices of saturated flows $r \notin \hat{\mathcal{R}}_l$ such that $\epsilon_{lr} \leq 0$ by $\bar{\mathcal{R}}_l^I$ and the set of remaining saturated flows (i.e. for which $\epsilon_{lr} > 0$) as $\bar{\mathcal{R}}_l^{II}$.
Let us consider the strategy ${\bf S}''$ constructed as follows: $s_{lr}'' = s'_{lr}$ for all $r \in \hat{\mathcal{R}}_l \cup \bar{\mathcal{R}}_l^I$ and $s_{lr}'' = s_{lr}^{II}$ for all $r \notin \hat{\mathcal{R}}_l$. Such a strategy does not increase any allocation for the saturated flow.
		
	Since the strategy ${\bf s}'_l$ is feasible, it is easy to see that the new strategy ${\bf s}_l''$ satisfies the constraint ${\bf A}_l {\bf s}_l'' \leq c_l$.
	
	The value of player $l$'s payoff for strategy ${\bf s}_l'$ is:
	$$
		Q_l({\bf s}_l', {\bf S}_{-l}^{II}) =
	$$
	$$
		=\sum_{r \in \hat{\mathcal{R}}_l \cup \bar{\mathcal{R}}_l^I} a_{lr} b_r u_r(\min \{ s'_{lr}, \min_{k \neq l: a_{kr} = 1} s^{II}_{kr} \}) + \sum_{r \in \bar{\mathcal{R}}_l^{II} } a_{lr} b_r u_r(\min \{ s_{lr}', \min_{k \neq l:a_{kr}=1} s_{kr}^{II} \})
	$$
	$$
		=\sum_{r \in \hat{\mathcal{R}}_l \cup \bar{\mathcal{R}}_l^I} a_{lr} b_r u_r(\min \{ s''_{lr}, \min_{k \neq l: a_{kr} = 1} s^{II}_{kr} \}) + \sum_{r \in \bar{\mathcal{R}}_l^{II}} a_{lr} b_r u_r(s_{lr}^{II})
	$$
	$$
		= Q_l({\bf s}_l'', {\bf s}_{-l}^{II}).
	$$

	The second equality follows from the fact that:
	$$
		s_{lr}' = s_{lr}^{II} + \epsilon_{lr} > s_{lr}^{II} = \min_k s_{kl}^{II},
	$$
	which, in turn, follows from the assumption on accordance of numbering iterations and links removed from consideration. Consequently, the payoff from strategy ${\bf s}_l''$ is equal to the payoff from strategy ${\bf s}_l'$.

\end{proof}

Lemma \ref{lem2} immediately implies that all strategies deviating from ${\bf S}^{II}$ and giving better payoffs for player $l$ should not have allocated more capacity for flows saturated before $l$th iteration, than in the strategy ${\bf S}^{II}$. Otherwise, the difference in allocated capacity is wasted; the superfluous fraction of capacity may be used for allocation of the remaining flows in subsequent iterations.

\begin{corollary}\label{cor:1}
	For any $l \in \{ 1, \ldots, L \}$, if there is no strategy ${\bf s}_l'$ such that $\forall_{r \notin \hat{\mathcal{R}}_l} \epsilon_{lr} \leq 0$, for which:
	\begin{equation}\label{col1}
		Q_l({\bf s}_l', {\bf S}_{-l}^{II}) \geq Q_l({\bf S}^{II}),
	\end{equation}
	then there is no other strategy ${\bf s}_l'$ satisfying \eqref{col1}.
\end{corollary}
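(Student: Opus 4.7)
My plan is to prove this as an immediate contrapositive consequence of Lemma \ref{lem2}. First I would negate the conclusion and suppose there exists a strategy $\mathbf{s}_l'$ with at least one $\epsilon_{lr} > 0$ for some $r \notin \hat{\mathcal{R}}_l$ which nevertheless satisfies $Q_l(\mathbf{s}_l', \mathbf{S}_{-l}^{II}) \geq Q_l(\mathbf{S}^{II})$. The goal is then to produce a restricted deviation (one with $\epsilon_{lr} \leq 0$ for all $r \notin \hat{\mathcal{R}}_l$) that achieves the same lower bound on the payoff, which directly contradicts the hypothesis that no such restricted strategy exists.

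Lemma \ref{lem2} is tailor-made for exactly this step: given any such $\mathbf{s}_l'$ with a positive increment on a saturated flow, it constructs (via the sets $\bar{\mathcal{R}}_l^{I}$ and $\bar{\mathcal{R}}_l^{II}$ in its proof) a strategy $\mathbf{s}_l''$ that agrees with $\mathbf{s}_l'$ on $\hat{\mathcal{R}}_l \cup \bar{\mathcal{R}}_l^{I}$, resets the allocations on $\bar{\mathcal{R}}_l^{II}$ back down to their $\mathbf{S}^{II}$-values, remains feasible with respect to the capacity constraint $\mathbf{A}_l \mathbf{s}_l'' \leq c_l$, and yields exactly the same payoff $Q_l(\mathbf{s}_l'', \mathbf{S}_{-l}^{II}) = Q_l(\mathbf{s}_l', \mathbf{S}_{-l}^{II})$. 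So I would simply invoke Lemma \ref{lem2} on $\mathbf{s}_l'$ to obtain $\mathbf{s}_l''$, then observe that $\mathbf{s}_l''$ belongs to the restricted class and still satisfies $Q_l(\mathbf{s}_l'', \mathbf{S}_{-l}^{II}) \geq Q_l(\mathbf{S}^{II})$, contradicting the hypothesis.

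Since Lemma \ref{lem2} does all of the substantive work, there is essentially no remaining obstacle; this corollary really only repackages the lemma in the form that will be convenient for the subsequent proof of Theorem \ref{thm:nash}. The one small point of care is interpretational: the phrase ``other strategy'' in the statement should be read as ``strategy not of the restricted form'', i.e.\ one for which some $\epsilon_{lr} > 0$ with $r \notin \hat{\mathcal{R}}_l$. With that reading the contrapositive argument above closes without any further estimates, and no additional properties of the isoelastic utilities or of Algorithm \ref{alg2} beyond those already used in Lemma \ref{lem2} are required.
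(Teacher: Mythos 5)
Your proposal is correct and matches the paper's intent: the paper gives no explicit proof of this corollary, but the paragraph preceding it makes clear that it is meant as an immediate contrapositive application of Lemma \ref{lem2}, exactly as you argue. The only substantive content is the observation that the strategy ${\bf s}_l''$ produced by Lemma \ref{lem2} lies in the restricted class and preserves the payoff, which you state explicitly.
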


\begin{lemma}\label{lem3}
	Let ${\bf S}'$ be a strategy constructed as follows. For any fixed player $l$, let:
	$$
		s_{lr}' = s_{lr}^{II} + \epsilon_{lr},
	$$
	where:
	$$
		\forall_{r \notin \hat{\mathcal{R}}_l} \; \epsilon_{lr} \leq 0,
	$$
	and the remaining $\epsilon_{lr}$ are arbitrary real numbers. Other players' strategies fulfill: ${\bf S}'_{-l} = {\bf S}^{II}_{-l}$.
	
	If there exists $r \in \hat{\mathcal{R}}_l$ such that $\epsilon_{rl} < 0$, then there exists a strategy ${\bf s}_l''$, such that $\forall_{r \in \hat{\mathcal{R}}_l} \epsilon_{rl} \geq 0$, giving no lower payoff.
	
\end{lemma}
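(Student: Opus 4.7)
The plan is to build $\mathbf{s}_l''$ by \emph{restoring} the equilibrium allocation $s_{lr}^{II}$ on every non-saturated flow while leaving the deviation untouched on saturated flows. Concretely, I would set
\[
s_{lr}'' = \begin{cases} s_{lr}^{II} & \text{if } r \in \hat{\mathcal{R}}_l, \\ s_{lr}' & \text{if } r \notin \hat{\mathcal{R}}_l. \end{cases}
\]
By construction this immediately achieves the required property $\epsilon_{lr}'' = 0 \geq 0$ for every $r \in \hat{\mathcal{R}}_l$, and it also preserves the sign restriction of Lemma~\ref{lem2} on the saturated flows, so $\mathbf{s}_l''$ lives in the same class of deviations under consideration.

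The first technical step is to verify feasibility. By the renumbering convention $\phi(l)=l$, link $l$ has its capacity filled in $\mathbf{S}^{II}$, so $\sum_r a_{lr}\, s_{lr}^{II} = c_l$. Hence
\[
\sum_r a_{lr}\, s_{lr}'' \;=\; c_l + \sum_{r \notin \hat{\mathcal{R}}_l} a_{lr}(s_{lr}' - s_{lr}^{II}) \;=\; c_l + \sum_{r \notin \hat{\mathcal{R}}_l} a_{lr}\, \epsilon_{lr} \;\leq\; c_l,
\]
using the hypothesis $\epsilon_{lr} \leq 0$ for $r \notin \hat{\mathcal{R}}_l$.

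The second step is to compute both payoffs by leveraging Remark~\ref{rem1}. That remark says that in $\mathbf{S}^{II}$ every link on a flow's path allocates exactly the flow's bottleneck rate; consequently for any unilateral deviation $\mathbf{s}_l$ of player $l$ we have $\min_{k\neq l,\,a_{kr}=1} s_{kr}^{II} = s_{lr}^{II}$, and therefore
\[
Q_l(\mathbf{s}_l, \mathbf{S}_{-l}^{II}) \;=\; \sum_r a_{lr}\, b_r\, u_r\bigl(\min(s_{lr},\, s_{lr}^{II})\bigr).
\]
Applying this formula term by term to $\mathbf{s}_l'$ and $\mathbf{s}_l''$, the contributions from $r \notin \hat{\mathcal{R}}_l$ coincide, while for each $r \in \hat{\mathcal{R}}_l$ the contribution equals $b_r u_r(s_{lr}^{II})$ under $\mathbf{s}_l''$ and $b_r u_r(\min(s_{lr}', s_{lr}^{II})) \leq b_r u_r(s_{lr}^{II})$ under $\mathbf{s}_l'$, with strict inequality at the index $r^{\ast}$ where $\epsilon_{lr^{\ast}}<0$. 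Monotonicity of each $u_r$ then gives $Q_l(\mathbf{s}_l'', \mathbf{S}_{-l}^{II}) \geq Q_l(\mathbf{s}_l', \mathbf{S}_{-l}^{II})$, closing the proof.

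I do not anticipate any serious technical obstacle: once one notices Remark~\ref{rem1}, the construction is essentially forced and the remaining work is bookkeeping. The only step that requires genuine care is the feasibility check, which relies on link $l$ being \emph{exactly} saturated in $\mathbf{S}^{II}$ so that the capacity freed on saturated flows (via $\epsilon_{lr}\leq 0$) suffices to absorb the potentially larger equilibrium values being restored on the non-saturated ones. The conceptual crux is simply the observation that, because every other link on the path already allocates the bottleneck rate, any downward deviation on a non-saturated flow strictly wastes utility and can be undone without loss.
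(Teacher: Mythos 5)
Your construction of $\mathbf{s}_l''$ is genuinely different from the paper's, and it fails. The problem is the payoff formula you derive from Remark~\ref{rem1}, namely $Q_l(\mathbf{s}_l,\mathbf{S}^{II}_{-l})=\sum_r a_{lr}b_r u_r(\min(s_{lr},s_{lr}^{II}))$. For a flow $r$ that traverses \emph{only} link $l$ (a local flow), the minimum $\min_{k\neq l,\,a_{kr}=1}s_{kr}^{II}$ is taken over an empty set, so that flow's payoff term is $u_r(s_{lr})$, with no cap at $u_r(s_{lr}^{II})$. Local flows do belong to $\hat{\mathcal{R}}_l$ (they become saturated only when link $l$ itself fills), and they are exactly the flows into which a deviator can profitably reinvest the capacity freed by cutting the saturated flows. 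Your $\mathbf{s}_l''$ discards that freed capacity by resetting every $r\in\hat{\mathcal{R}}_l$ to $s_{lr}^{II}$, so it can have a \emph{strictly lower} payoff than $\mathbf{s}_l'$. Concretely: let $c_l=10$, with one local and one non-local flow in $\hat{\mathcal{R}}_l$ at $s_{lr}^{II}=3$ each, one saturated flow at $s_{lr}^{II}=4$, $u_r(x)=\sqrt{x}$, $b_r=w_r=1$. The deviation $\mathbf{s}_l'=(8,2,0)$ satisfies the lemma's hypotheses ($\epsilon=-1<0$ on the non-local $\hat{\mathcal{R}}_l$-flow, $\epsilon=-4\le 0$ on the saturated one) and earns $\sqrt{8}+\sqrt{\min(2,3)}+\sqrt{\min(0,4)}=3\sqrt{2}\approx 4.24$, while your $\mathbf{s}_l''=(3,3,0)$ earns only $2\sqrt{3}\approx 3.46$. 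Only the feasibility check in your write-up is sound.

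The paper avoids freezing the non-saturated flows: its $\mathbf{s}_l''$ is defined as the \emph{maximizer} of $\sum_{r\in\hat{\mathcal{R}}_l}a_{lr}b_ru_r(s_{lr})$ over the set $D_l''$ in which the saturated flows are held at $s_{lr}'$ and the non-saturated flows share the enlarged budget $c_l-\sum_{r\notin\hat{\mathcal{R}}_l}a_{lr}s_{lr}'$, i.e.\ the freed capacity is re-optimally reinvested. The sign condition $\epsilon_{lr}\ge 0$ on $\hat{\mathcal{R}}_l$ is then not automatic, as it is in your construction, but is deduced from Lemma~\ref{lem1} (monotonicity of the argmax components in the budget) — which is the entire reason Lemma~\ref{lem1} is proved. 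A useful sanity check: if a term-by-term capping comparison like yours were valid for all $r\in\hat{\mathcal{R}}_l$, then Corollary~\ref{cor:2} and the Nash property would follow in two lines, and Lemma~\ref{lem4} together with the transformation argument in the proof of Theorem~\ref{thm:nash} (whose whole purpose is to control transfers of capacity from saturated to non-saturated flows) would be superfluous. The presence of that machinery is a signal that the uncapped, local-flow terms are where the real difficulty lies.
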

	
\begin{proof}
	
	The following sequence of inequalities hold:
	\begin{equation}\label{lem3:1}
		\max_{{\bf s}_l \in D'_l} \sum_{r \in \hat{\mathcal{R}}_l} a_{lr} b_r u_r(s_{lr})
		\geq 
		\sum_{r \in \hat{\mathcal{R}}_l} a_{lr} b_r u_r(s_{lr}^{II})
	\end{equation}
	$$
		\geq 
		\sum_{r \in \hat{\mathcal{R}}_l} a_{lr} b_r u_r(\min \{ s_{lr}', s_{lr}^{II} \})
		= Q_{l}({\bf s}_l', {\bf S}_{-l}^{II}) - \sum_{r \notin \hat{\mathcal{R}}_l} a_{lr} b_r u_r(s_{lr}').
	$$
	
	Let:
	\begin{equation}\label{lem3:3}
		\tilde{{\bf s}}_l = \arg\max_{{\bf s}_l \in D_l'} \sum_{r \in \hat{\mathcal{R}}_l} a_{lr} b_r u_r(s_{lr}),
	\end{equation}
	$$
		D_l' = \left\{ {\bf s}_l : \sum_{r \in \hat{\mathcal{R}}_l} a_{lr} s_{lr} \leq c_l - \sum_{r \notin \hat{\mathcal{R}}_l} a_{lr} s_{lr}^{II}, \; \forall_r \; 0 \leq s_{lr}, \forall_{r \notin \hat{\mathcal{R}}_l} \; s_{lr} = s_{lr}^{II} \right\}.
	$$
	\begin{equation}\label{lem3:2}
		{\bf s}_l'' = \arg\max_{{\bf s}_l \in D_l''} \sum_{r \in \hat{\mathcal{R}}_l} a_{lr} b_r u_r(s_{lr}),
	\end{equation}
	$$
		D''_l = \left\{ {\bf s}_l : \sum_{r \in \hat{\mathcal{R}}_l} a_{lr} s_{lr} \leq c_l - \sum_{r \notin \hat{\mathcal{R}}_l} a_{lr} s_{lr}', \; \forall_{r \notin \hat{\mathcal{R}}_l} s_{lr} = s_{lr}', \; s_{lr} \geq 0 \right\},
	$$

	It is clear that $\tilde{\bf s}_l = {\bf s}_l^{II}$ (compare with step \ref{alg2:3} of Algorithm \ref{alg2}).
	
	From the assumption on $\epsilon_{lr}$ for $r \notin \hat{\mathcal{R}}_l$, it holds that:
	$$
		c_l - \sum_{r \notin \hat{\mathcal{R}}_l} a_{lr} s_{lr}' = c_l - \sum_{r \notin \hat{\mathcal{R}}_l} a_{lr} (s_{lr}^{II} + \epsilon_{lr}) 
	$$
	$$
		= c_l - \sum_{r \notin \hat{\mathcal{R}}_l} a_{lr} s_{lr}^{II} - \sum_{r \notin \hat{\mathcal{R}}_l} a_{lr} \epsilon_{lr}
	$$
	\begin{equation}\label{lem3:4}
		\geq c_l - \sum_{r \notin \hat{\mathcal{R}}_l} a_{lr} s_{lr}^{II}.
	\end{equation}
	
	Consider feasible solutions sets $D_l'$ and $D_l''$ projected on a linear subspace restricted only to the coordinates $r \in \hat{\mathcal{R}}_l$ (values of all other coordinates of vectors ${\bf s}_l$ are fixed, although may be different in sets $D_l'$ and $D_l''$). Let us put $\beta_1 = c_l - \sum_{r \notin \hat{\mathcal{R}}_l} a_{lr} s_{lr}'$ and $\beta_2 = c_l - \sum_{r \notin \hat{\mathcal{R}}_l} a_{lr} s_{lr}^{II}$. Observe that the optimal solutions in both cases must fulfill the constraints with equality, as the objective functions are strictly increasing. Since both objective functions \eqref{lem3:2} and \eqref{lem3:3} on the restricted coordinates are identical, and from \eqref{lem3:4} it holds that $\beta_1 \geq \beta_2$, thus from Lemma \ref{lem1}, the corresponding optimal solutions must satisfy:
	$$
		\forall_{r \in \hat{\mathcal{R}}_l} \; s_{lr}'' \geq s_{lr}^{II}
	$$
	which implies:
	$$
		\forall_{r \in \hat{\mathcal{R}}_l} \; \epsilon_{lr} = s_{lr}'' - s_{lr}^{II} \geq 0.
	$$
	
\end{proof}

Combining Corollary \ref{cor:1} and Lemma \ref{lem3} the following can be concluded:

\begin{corollary}\label{cor:2}
 	For any $l \in \{ 1, \ldots, L \}$, if there is no strategy ${\bf s}_l'$ such that $\forall_{r \notin \hat{\mathcal{R}}_l} \epsilon_{lr} \leq 0$ and $\forall_{r \in \hat{\mathcal{R}}_l} \epsilon_{lr} \geq 0$, for which:
	\begin{equation}\label{col2}
		Q_l({\bf s}_l', {\bf S}_{-l}^{II}) \geq Q_l({\bf S}^{II}),
	\end{equation}
	then there is no other strategy ${\bf s}_l'$ satisfying \eqref{col2}.
\end{corollary}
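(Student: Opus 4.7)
The plan is to prove Corollary \ref{cor:2} by contrapositive, chaining Corollary \ref{cor:1} (which already packages Lemma \ref{lem2}) with Lemma \ref{lem3}. Specifically, I will show that if any deviation ${\bf s}_l'$ from ${\bf S}^{II}$ (with arbitrary signs of the $\epsilon_{lr}$) satisfies $Q_l({\bf s}_l', {\bf S}_{-l}^{II}) \geq Q_l({\bf S}^{II})$, then one can construct a deviation simultaneously satisfying $\epsilon_{lr} \le 0$ for $r \notin \hat{\mathcal{R}}_l$ and $\epsilon_{lr} \ge 0$ for $r \in \hat{\mathcal{R}}_l$ with no worse payoff.

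The first reduction is a direct invocation of Corollary \ref{cor:1}: from the existence of an arbitrary improving ${\bf s}_l'$, we obtain a strategy ${\bf s}_l^{*}$ with $\epsilon_{lr} \le 0$ for every saturated flow $r \notin \hat{\mathcal{R}}_l$, still satisfying $Q_l({\bf s}_l^{*}, {\bf S}_{-l}^{II}) \geq Q_l({\bf S}^{II})$. Intuitively, any extra allocation given on link $l$ to a flow that is already bottlenecked elsewhere along its path is wasted, so it can be clipped without hurting the payoff.

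The second reduction applies Lemma \ref{lem3} to ${\bf s}_l^{*}$: since the hypotheses of that lemma hold (the sign constraint on saturated coordinates is exactly the one just established, and the non-saturated coordinates may take any sign), we obtain ${\bf s}_l^{**}$ with $\epsilon_{lr} \ge 0$ for all $r \in \hat{\mathcal{R}}_l$ whose payoff is no smaller than that of ${\bf s}_l^{*}$. The key point I need to confirm is that this second transformation does not spoil the sign pattern established in the first step. Inspecting the construction in the proof of Lemma \ref{lem3}, the feasible set $D_l''$ \emph{fixes} $s_{lr} = s_{lr}^{*}$ for all $r \notin \hat{\mathcal{R}}_l$, so the values (and hence the corresponding $\epsilon_{lr}$) on saturated coordinates are inherited unchanged from ${\bf s}_l^{*}$; in particular they remain $\le 0$.

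Hence ${\bf s}_l^{**}$ meets both sign conditions and satisfies $Q_l({\bf s}_l^{**}, {\bf S}_{-l}^{II}) \geq Q_l({\bf s}_l^{*}, {\bf S}_{-l}^{II}) \geq Q_l({\bf S}^{II})$, contradicting the assumption that no such strategy exists. The only nontrivial step is the compatibility check between the two reductions, and because Lemma \ref{lem3} leaves the saturated-flow coordinates untouched, this check is immediate; there is no arithmetic to grind through beyond invoking the two preceding results in sequence.
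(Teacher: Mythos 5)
Your proof is correct and follows exactly the route the paper intends: the paper simply states that Corollary \ref{cor:2} follows by ``combining Corollary \ref{cor:1} and Lemma \ref{lem3}'', and your argument fleshes out that chaining, including the (correct) observation that the set $D_l''$ in the proof of Lemma \ref{lem3} fixes the saturated-flow coordinates so the first sign pattern is preserved.
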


\begin{lemma}\label{lem4}
	Consider any link $l \in \{ 1, \ldots, L \}$ and two flows, one that is saturated in $l$th iteration (denoted $r=1$), and one that is not saturated in $l$th iteration (denoted $r=2$). Denote:
	$$
		[s_{1l}^*, s_{2l}^*]^T = \arg\max_{[x_1,x_2]^T \in D} \left( u_1(x_1) + u_2(x_2) \right),
	$$
	where $D = \{ [x_1,x_2]^T : \; x_1 + x_2 \leq c \}$. Consider any allocation satisfying $s_{1l} \leq s_{1l}^*$ and $s_{2l} \geq s_{2l}^*$. Let $\delta > 0$. Then:
	$$
		0 \leq \left( u_1(s_{1l}^*) + u_2(s_{2l}^*) \right) - \left( u_1(s_{1l}^* - \delta) + u_2(s_{2l}^* + \delta) \right)
	$$
	\begin{equation}\label{lem4:0}
		\leq 
		\left( u_1(s_{1l}) + u_2(s_{2l}) \right) - \left( u_1(s_{1l} - \delta) + u_2(s_{2l} + \delta) \right).
	\end{equation}
\end{lemma}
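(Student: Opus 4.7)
The plan is to reframe the inequality in terms of two ``marginal'' quantities and then invoke concavity. Specifically, set $A(x) := u_1(x) - u_1(x-\delta)$ and $B(x) := u_2(x+\delta) - u_2(x)$, which measure the loss for flow $1$ and the gain for flow $2$ when $\delta$ units of capacity are shifted from flow $1$ to flow $2$. Since $u_1$ and $u_2$ are strictly concave, the finite increment of a concave function over a window of fixed length $\delta$ is nonincreasing in the window's position, so both $A$ and $B$ are nonincreasing functions of $x$ on their respective domains. In this notation the full chain \eqref{lem4:0} reduces to verifying
\[
0 \;\leq\; A(s_{1l}^*) - B(s_{2l}^*) \;\leq\; A(s_{1l}) - B(s_{2l}).
\]

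For the leftmost inequality, I would observe that the shifted allocation $(s_{1l}^* - \delta,\, s_{2l}^* + \delta)$ is feasible for the two-variable optimization defining $(s_{1l}^*, s_{2l}^*)$, since $(s_{1l}^* - \delta) + (s_{2l}^* + \delta) = s_{1l}^* + s_{2l}^* \leq c$. The optimality of $(s_{1l}^*, s_{2l}^*)$ therefore gives $u_1(s_{1l}^*) + u_2(s_{2l}^*) \geq u_1(s_{1l}^* - \delta) + u_2(s_{2l}^* + \delta)$, which, after rearrangement, is exactly $A(s_{1l}^*) \geq B(s_{2l}^*)$.

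For the right inequality I would invoke the monotonicity of $A$ and $B$ directly: the hypothesis $s_{1l} \leq s_{1l}^*$ combined with $A$ nonincreasing yields $A(s_{1l}) \geq A(s_{1l}^*)$, while $s_{2l} \geq s_{2l}^*$ combined with $B$ nonincreasing yields $B(s_{2l}) \leq B(s_{2l}^*)$. Subtracting the second from the first gives $A(s_{1l}) - B(s_{2l}) \geq A(s_{1l}^*) - B(s_{2l}^*)$, which is the desired bound.

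No serious obstacle is anticipated; the lemma is essentially a discrete manifestation of the concavity of $u_1$ and $u_2$, and the two-argument optimality of $(s_{1l}^*, s_{2l}^*)$. The only technical point worth flagging is the implicit requirement that $s_{1l}^* - \delta$ and $s_{1l} - \delta$ lie in the domain of $u_1$, i.e.\ remain nonnegative; this corresponds to the regime in which the lemma is to be applied in the subsequent Nash equilibrium proof, so I would state this as a standing assumption rather than attempt to treat the boundary case.
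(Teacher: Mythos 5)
Your proposal is correct and follows essentially the same route as the paper: the left inequality comes from the optimality of $[s_{1l}^*, s_{2l}^*]^T$ (your feasibility remark for the shifted point is a nice explicit touch the paper leaves implicit), and the right inequality is obtained from the two concavity estimates $u_1(s_{1l}^*) - u_1(s_{1l}^*-\delta) \leq u_1(s_{1l}) - u_1(s_{1l}-\delta)$ and $u_2(s_{2l}+\delta) - u_2(s_{2l}) \leq u_2(s_{2l}^*+\delta) - u_2(s_{2l}^*)$, which are exactly your monotonicity statements for $A$ and $B$. The only difference is notational packaging, so no further comparison is needed.
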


\begin{proof}
	The first inequality in the claim \eqref{lem4:0} is valid, since $[s_{1l}^*, s_{2l}^*]^T$ is the maximal solution. The second inequality follows from the concavity of functions $u_1$ and $u_2$:
	\begin{equation}\label{lem4:1}
		u_1(s_{1l}^*) - u_1(s_{1l}^* - \delta) \leq u_1(s_{1l}) - u_1(s_{1l} - \delta),
	\end{equation}
	\begin{equation}\label{lem4:2}
		u_2(s_{2l} + \delta) - u_2(s_{2l}) \leq u_2(s_{2l}^* + \delta) - u_2(s_{2l}^*).
	\end{equation}
	Summing \eqref{lem4:1} and \eqref{lem4:2} side by side, and reordering the terms we obtain the claimed inequality \eqref{lem4:0}.
\end{proof}

This lemma states that transferring a fraction $\delta > 0$ of capacity from a saturated flow to a non-saturated one gives a degradation in the value of utility of the selected pair of flows. Moreover, if the allocations deviate from the optimal one (for this pair of flows) in a way that the saturated flow has less capacity, then this degradation is even higher than the one corresponding to the optimal allocation.

The main result of this Section can be now proven.

\begin{proof}[Proof of Theorem \ref{thm:nash}]
		
	Let us consider the following transformation of strategy ${\bf S}$ into $\hat{\bf S}$. Chose any $\delta > 0$, select any link $l \in \{ 1, \ldots, L \}$, select any flow $r \notin \hat{\mathcal{R}}_l$, subtract: $\hat{s}_{rl} \leftarrow s_{rl} - \delta$, select any flow $q \in \hat{\mathcal{R}}_l$, and add $\hat{s}_{ql} \leftarrow s_{ql} + \delta$.
	
	From Corollary \ref{cor:2} it is enough to restrict the considerations only to such strategies ${\bf S}'$, where $s_{lr}' = s_{lr}^{II} + \epsilon_{lr}$, with $\forall_{r \notin \hat{\mathcal{R}}_l} \epsilon_{lr} \leq 0$ and $\forall_{r \in \hat{\mathcal{R}}_l} \epsilon_{lr} \geq 0$.
	
	It is easy to see that any such strategy ${\bf S}'$ can be produced from the strategy ${\bf S}^{II}$ using a finite number of described transformations: only flows $r \in \hat{\mathcal{R}}_l$ may get higher allocations (adding $\delta$), and only $r \notin \hat{\mathcal{R}}_l$ may get lower allocations (subtracting $\delta$). The amount of added and subtracted capacity must be preserved. Let us denote a strategy obtained after applying $k$ such transformations by ${\bf S}^{II(k)}$. 
	
	Therefore, it is enough to show that such transformations satisfy the assumptions of Lemma \ref{lem4}. Consider a flow that gets saturated in $n$th iteration. Its allocations on link $l$ throughout the subsequent iterations of Algorithm \ref{alg2} form the following sequence:
	$$
		s_{rl}^{(1)} \leq s_{rl}^{(2)} \leq \ldots \leq s_{rl}^{(n)} = s_{rl}^{(n+1)} = \ldots = s_{rl}^{II}.
	$$
	This follows from the fact that as subsets of flows passing through link $l$ get saturated, in the subsequent iterations the amount of capacity to distribute among the remaining non-saturated flows form a nondecreasing sequence. From the Lemma \ref{lem1}, the allocations of a particular non-saturated $r$th flow are nondecreasing.
	
	Consider a pair of flows, $r_1 \notin \hat{\mathcal{R}}_l$ and $r_2 \in \hat{\mathcal{R}}_l$, after applying $k$ transformations. Let $n$ be the iteration in which the flow $r_1$ gets saturated. Thus its allocation on link $l > n$ satisfy:
	\begin{equation}\label{thm:nash:1}
		s_{r_1l}^{(n)} \geq \min_{k : a_{kl}=1} s_{r_1k}^{(n)} = s_{r_1l}^{(l)} = s_{r_1l}^{II} \geq s_{r_1l}^{II(k)}.
	\end{equation}
	On the other hand, the allocation on the same link $l > n$ of flow $r_2$ satisfy:
	\begin{equation}\label{thm:nash:2}
		s_{r_2l}^{(n)} \leq s_{r_2l}^{(l)} = s_{r_2l}^{II} \leq s_{r_2l}^{II(k)}.
	\end{equation}
	However, the allocations $[s_{r_1l}^{(n)}, s_{r_2l}^{(n)}]^T$ are the optimal values obtained in the step \ref{alg2:3} of Algorithm \ref{alg2}. Observe that these values are also the optimal solution of the problem of maximizing $u_{r_1}(x_1) + u_{r_2}(x_2)$, with respect to $x_1,x_2 \geq 0$, subject to: $x_1 + x_2 \leq c_l - \sum_{r \notin \{ r_1, r_2 \} } a_{rl} s_{rl}^{(n)}$.
	
	Notice that this optimization problem is equivalent to the one for a pair of flows from the Lemma \ref{lem4}, while the allocation $[s_{r_1l}^{II(k)}, s_{r_2l}^{II(k)}]^T$ deviates from the optimal in the same way as the one in Lemma \ref{lem4} (see relations \eqref{thm:nash:1} and \eqref{thm:nash:2}). Thus any transformation between strategies ${\bf S}^{II}$ and ${\bf S}'$ cannot improve the value of $u_{r_1}(s_{r_{1l}}') + u_{r_2}(s_{r_{2l}}')$. From the additivity of objective function \eqref{num1}, the total utility $\sum_{r=1}^R a_{lr} b_r u_r(s_{rl}')$ cannot increase, and consequently, the payoff $\sum_{r=1}^R a_{lr} b_r u_r(\min_{k: a_{kl}=1} s_{kl}')$ cannot increase.
	
\end{proof}

\section{Properties of the game}\label{sec:props}

In this section we discuss properties of the strategy profile ${\bf S}^{II}$ computed by Algorithm \ref{alg2}, as well as properties of the game itself, in terms of the quality of its pure equilibria. 

\subsection{Pareto optimality}\label{sec:pareto}

Taking a closer look at the construction of strategy ${\bf S}^{II}$ we may conclude that it provides an allocation that cannot improve the payoff of any player without degradation of other payoffs. In particular, we show that the obtained strategy is strongly Pareto-optimal (Pareto-efficient).

\begin{definition}[Pareto optimality]
	Strategy ${\bf S}$ is Pareto-optimal if there is no strategy ${\bf S}' \neq {\bf S}$ such that $\forall_l \; Q_l({\bf S}') > Q_l({\bf S})$.
\end{definition}

\begin{definition}[strong Pareto optimality]
	Strategy ${\bf S}$ is strongly Pareto-optimal if there is no strategy ${\bf S}' \neq {\bf S}$ such that $\forall_l \; Q_l({\bf S}') \geq Q_l({\bf S})$ and there exists such $k$ that $Q_k({\bf S}') > Q_k({\bf S})$.
\end{definition}

Let us observe the following fact:

\begin{proposition}\label{prop2}
	Strategy ${\bf S}^{II}$ is Pareto-optimal.
\end{proposition}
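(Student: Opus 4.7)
The plan is to argue by contradiction, isolating the fact that link $1$ (after the renumbering $\phi(n) = n$) has its one-step allocation frozen throughout Algorithm \ref{alg2} and already maximizes its payoff over all feasible assignments of its own capacity. Suppose, for contradiction, that a feasible strategy profile ${\bf S}' \neq {\bf S}^{II}$ satisfies $Q_l({\bf S}') > Q_l({\bf S}^{II})$ for every $l$; it suffices to derive a contradiction for $l = 1$ alone.

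First I would verify that $s_{1r}^{II} = s_{1r}^{(1)}$ for every $r$ with $a_{1r} = 1$. Since link $1$ belongs to $\mathcal{L}_1$, we have $\sum_r a_{1r} \min_k s_{kr}^{(1)} = c_1$, while the closed form of the one-step allocation yields $\sum_r a_{1r} s_{1r}^{(1)} = c_1$. Combined with the pointwise inequality $\min_k s_{kr}^{(1)} \leq s_{1r}^{(1)}$, this forces the equality $\min_k s_{kr}^{(1)} = s_{1r}^{(1)}$ for every $r \in \mathcal{R}_1$, so link $1$ is the bottleneck for every flow traversing it after iteration $1$. The saturated-flow constraint in step \ref{alg2:3} then freezes the allocation on link $1$ in all subsequent iterations, giving $s_{1r}^{II} = s_{1r}^{(1)}$; applying Remark \ref{rem1} the payoff simplifies to $Q_1({\bf S}^{II}) = \sum_r a_{1r} b_r u_r(s_{1r}^{(1)})$.

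Next I would bound $Q_1({\bf S}')$ from above. Monotonicity of each $u_r$ together with $\min_k s_{kr}' \leq s_{1r}'$ gives $Q_1({\bf S}') \leq \sum_r a_{1r} b_r u_r(s_{1r}')$. The feasibility constraint $\sum_r a_{1r} s_{1r}' \leq c_1$ places ${\bf s}_1'$ in the domain $D_1$ of the local problem \eqref{alg1}, whose maximizer is exactly ${\bf s}_1^{(1)}$; hence $\sum_r a_{1r} b_r u_r(s_{1r}') \leq \sum_r a_{1r} b_r u_r(s_{1r}^{(1)}) = Q_1({\bf S}^{II})$. Chaining the two inequalities yields $Q_1({\bf S}') \leq Q_1({\bf S}^{II})$, contradicting the assumed strict improvement at $l = 1$.

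The main obstacle is the first step: one has to track carefully, using the renumbering convention and the definition of $\mathcal{L}_1$, that every flow through link $1$ is bottlenecked there after the first iteration and that the algorithm performs no further modification of ${\bf s}_1$ thereafter. Once that structural claim is in hand, the remainder of the argument reduces to two elementary inequalities based on the monotonicity of $u_r$ and the local concave optimality of the one-step allocation, combined with the global characterization of ${\bf S}^{II}$ supplied by Remark \ref{rem1}.
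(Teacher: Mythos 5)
Your argument is correct and follows essentially the same route as the paper: both proofs reduce the claim to player $1$ alone, observe that ${\bf s}_1^{II}$ coincides with the one-step allocation ${\bf s}_1^{I}$ (which maximizes the local problem \eqref{alg1} over $D_1$), and conclude that no deviation can strictly improve $Q_1$. Your version merely makes explicit two steps the paper leaves implicit, namely why the allocation on link $1$ is frozen after iteration $1$ and why the $\min$ in the payoff can be bounded via monotonicity of $u_r$.
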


\begin{proof}
	It is enough to consider only link $l=1$. Since the allocation ${\bf s}_1^{II} = {\bf s}^I_1$, and this is the optimal solution of the problem \eqref{alg1}, no strategy can give higher payoff. We conclude that there is no strategy ${\bf S}' \neq {\bf S}^{II}$ such that for all $l$ simultaneously higher payoff can be achieved.
\end{proof}

In order to prove the strong Pareto-optimality, it is enough to show that for no link a strictly higher payoff can be obtained, without decreasing the payoff of any link. 

\begin{theorem}
	Strategy ${\bf S}^{II}$ is strongly Pareto-optimal.
\end{theorem}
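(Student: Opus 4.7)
The plan is to strengthen the argument of Proposition \ref{prop2} by an induction on the link index $l$, using the convention that links are renumbered so that link $l$ is the one removed in iteration $l$ of Algorithm \ref{alg2}. The inductive hypothesis will read: for every $j \leq l$, if a strategy $\mathbf{S}'$ satisfies $Q_j(\mathbf{S}') \geq Q_j(\mathbf{S}^{II})$, then $\min_{k : a_{kr}=1} s_{kr}' = s_{jr}^{II}$ for every flow $r$ that first saturates at link $j$, and in particular $s_{kr}' \geq s_{jr}^{II}$ along that flow's entire path. The base case $l=1$ is essentially contained in the proof of Proposition \ref{prop2}: since $\mathbf{s}_1^{II} = \mathbf{s}_1^{I}$ is the unique maximizer of the strictly concave problem \eqref{alg1}, and Remark \ref{rem1} gives $\min_k s_{kr}^{II} = s_{1r}^{II}$ for all $r \in \mathcal{R}_1$, the value $Q_1(\mathbf{S}^{II})$ attains the global maximum of $Q_1$ over all strategies; matching it forces $\mathbf{s}_1' = \mathbf{s}_1^{I}$ and, by strict monotonicity of $u_r$, equality of the per-flow minima as claimed.

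For the inductive step, partition the flows through link $l$ into those saturated in an earlier iteration, $\mathcal{P}_l := \mathcal{R}_l \cap \bigcup_{j<l} \mathcal{R}_j$, and those first saturated in iteration $l$, $\mathcal{N}_l := \mathcal{R}_l \setminus \bigcup_{j<l} \mathcal{R}_j$. For $r \in \mathcal{P}_l$ the inductive hypothesis pins down the contribution of flow $r$ to $Q_l(\mathbf{S}')$ at exactly $b_r u_r(s_{jr}^{II})$, matching $Q_l(\mathbf{S}^{II})$ on those flows. For $r \in \mathcal{N}_l$, the feasibility constraint $\sum_r a_{lr} s_{lr}' \leq c_l$ combined with the lower bound $s_{lr}' \geq s_{jr}^{II}$ inherited for each $r \in \mathcal{P}_l$ gives $\sum_{r \in \mathcal{N}_l} a_{lr} s_{lr}' \leq c_l - \sum_{r \in \mathcal{P}_l} s_{jr}^{II}$, which is exactly the capacity budget used by Algorithm \ref{alg2} in iteration $l$. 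Since $(s_{lr}^{II})_{r \in \mathcal{N}_l}$ is the unique optimum of a strictly concave problem on this budget and $u_r(\min_k s_{kr}') \leq u_r(s_{lr}')$, we obtain $Q_l(\mathbf{S}') \leq Q_l(\mathbf{S}^{II})$, with equality only if $s_{lr}' = s_{lr}^{II}$ on $\mathcal{N}_l$ and no later link bottlenecks these flows under $\mathbf{S}'$, which is precisely what propagates the inductive hypothesis to step $l$.

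Running the induction through $l = 1, \ldots, L$ shows that any $\mathbf{S}'$ with $Q_l(\mathbf{S}') \geq Q_l(\mathbf{S}^{II})$ for all $l$ must coincide with $\mathbf{S}^{II}$ on every coordinate that contributes to any $Q_l$; hence strict inequality $Q_k(\mathbf{S}') > Q_k(\mathbf{S}^{II})$ for some $k$ is impossible, which is strong Pareto-optimality by Definition 2. The main obstacle is the bookkeeping of the inductive step: one must verify rigorously that the capacity budget which link $l$ effectively faces under $\mathbf{S}'$ does not exceed the budget used by Algorithm \ref{alg2}. This requires combining the global feasibility of $\mathbf{S}'$ with the pointwise lower bounds on allocations of previously saturated flows that fall out of the inductive hypothesis, and then invoking the uniqueness of the optimum of the strictly concave subproblem (for isoelastic utilities) to conclude equality of the $\mathcal{N}_l$-allocations in the borderline case.
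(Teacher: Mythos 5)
Your proof is correct and follows essentially the same route as the paper's: an induction over the links in the order they saturate under Algorithm~\ref{alg2}, partitioning the flows at link $l$ into those saturated earlier (whose rates are pinned down by the payoffs of earlier links) and those first saturated at $l$ (whose allocation is the unique optimum of the residual strictly concave subproblem on the same capacity budget). Your quantitative inductive hypothesis --- that any weakly dominating $\mathbf{S}'$ must reproduce the per-flow minima of $\mathbf{S}^{II}$ --- is just a sharper, more explicit rendering of the paper's informal statement that flows in $A_{l-1}$ ``cannot have their allocations changed in a way to improve the payoff.''
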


\begin{proof}
	The proof is by induction with respect to the increasing sequence of link subsets $A_1 \supset A_2 \supset \ldots \supset A_L$. Initially, consider a single link, $A_1 = \{ 1 \}$, and similarly, as in the proof of Proposition \ref{prop2}, observe that the allocation ${\bf s}_1^{II}$ cannot be changed, as it is the optimal solution of the problem \eqref{alg1}. 

	Let $l \geq 2$. Suppose all flows in the set $A_{l-1}$ cannot have their allocations changed in a way to improve the payoff. Now consider the subset of links $A_l = A_{l-1} \cup \{ l \}$. Flows in link $l$ can be divided into two disjoint subsets: $\hat{\mathcal{R}}_l$ (the flows which allocation is about to be set in $l$th iteration, or later) and $\mathcal{R}_l \setminus \hat{\mathcal{R}}_l$ (the flows which allocation is already fixed in $l$th iteration). As the latter flows are saturated, the increase of their allocations is useless as this cannot contribute to the increase of $l$th link's payoff (due to the transmission rate limit imposed by bottleneck link in $A_{l-1}$). However, if we decrease any of these allocations the payoff of some link in $A_{l-1}$ will have to decrease, which contradicts the inductive assumption. The remaining capacity is allocated in $l$th iteration among the flows in $\hat{\mathcal{R}}_l$ in a way that gives the highest payoff for link $l$ (as in the step \ref{alg2:3} of Algorithm \ref{alg2}). Any deviation from the strategy ${\bf S}^{II}$ of the allocation of flows in $\hat{\mathcal{R}}_l$ would give equal or lower payoffs. Consequently, the allocation ${\bf S}^{II}$ has the property that any change to this strategy cannot improve any links' payoff without causing a loss of at least one of the remaining players' payoffs.
\end{proof}

\subsection{Price of Anarchy and Stability}\label{sec:poa}

In this section we develop some results concerning the quality of social welfare resulting from pure Nash equilibria of the considered game, as compared to the optimal social welfare (which is nearly impossible to achieve in practice, as that would usually require fully centralized planning with the accurate knowledge of all problem parameters by a single decision maker). In contrast, Nash equilibria can be obtained easily in a decentralized way.

For the sake of further analysis, we assume that utility parameter $\gamma$ in \eqref{isoel} is strictly between $0$ and $1$. This is due to the fact that the notions of Price of Anarchy and Stability -- measures of quality of game equilibria \cite{koutsoupias1999worst} -- require the payoffs to be of the same sign (either positive or negative). Consequently, we make use of the following definition:
\begin{definition}
	Let $\mathcal{E}$ be the set of all Nash equilibria of a game. Let ${\bf x}^*$ be the optimal solution of NUM problem \eqref{num1}--\eqref{num1:3}. Let $\mathcal{W}$ be the social welfare function. The Price of Anarchy is defined as: 
	$$PoA = \frac{\max_{\bf S} \mathcal{W}({\bf S})}{\min_{{\bf S} \in \mathcal{E}} \mathcal{W}({\bf S})}.$$
	The Price of Stability is defined as:
	$$PoS = \frac{\max_{\bf S} \mathcal{W}({\bf S})}{\max_{{\bf S} \in \mathcal{E}} \mathcal{W}({\bf S})}.$$
\end{definition}

To begin with, let us consider a reversed problem of game design: given an optimal solution of the NUM problem, does it constitute a Nash equilibrium of some game variant?

\begin{theorem}
	For the uniform payoffs ($\forall_r \; b_r = 1$) the global optimum of problem \eqref{num1}--\eqref{num1:3} is a Nash equilibrium.
\end{theorem}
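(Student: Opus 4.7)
The plan is to build a candidate Nash equilibrium directly from the NUM optimum $\mathbf{x}^*$ by setting $s_{lr}^* = a_{lr} x_r^*$ for every link $l$ and flow $r$, so that each link on the path of $r$ allocates exactly $x_r^*$ to it. Feasibility of each $\mathbf{s}_l^*$ is immediate from $\mathbf{A}\mathbf{x}^* \leq \mathbf{c}$, and the induced transmission rates $\min_{k:a_{kr}=1}s_{kr}^*$ all equal $x_r^*$, so with uniform weights player $l$'s payoff is $Q_l(\mathbf{S}^*) = \sum_{r:a_{lr}=1} u_r(x_r^*)$, i.e.\ exactly the contribution of link $l$'s flows to the NUM objective.

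Fix a player $l$ and a potential deviation $\mathbf{s}_l'$. The first step is a reduction: for any flow $r$ that traverses at least one link $k\neq l$, the bottleneck imposed by the rest of $\mathbf{S}^*$ is $x_r^*$, so replacing $s_{lr}'$ by $\min\{s_{lr}',x_r^*\}$ preserves feasibility and does not change $Q_l(\mathbf{s}_l',\mathbf{S}_{-l}^*)$. After this reduction we may assume $s_{lr}' \leq x_r^*$ for every multi-link flow through $l$; for flows whose entire path is $\{l\}$, no such cap is available, and the induced transmission rate coincides with $s_{lr}'$ itself.

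The core step is to lift the (capped) deviation to a NUM-feasible global vector $\mathbf{x}'$ defined by $x_r' = s_{lr}'$ when $a_{lr}=1$ and $x_r' = x_r^*$ otherwise. Feasibility on link $l$ is inherited directly from $\mathbf{s}_l'$ being a feasible strategy. On any other link $k$, every flow $r$ with $a_{kr}=a_{lr}=1$ is a multi-link flow through $l$, hence $x_r'=s_{lr}'\leq x_r^*$, while flows through $k$ but not $l$ retain $x_r^*$; thus $\sum_r a_{kr}x_r' \leq \sum_r a_{kr}x_r^* \leq c_k$. By optimality of $\mathbf{x}^*$ for NUM we get $\sum_r u_r(x_r') \leq \sum_r u_r(x_r^*)$, and cancelling the identical terms for $r$ with $a_{lr}=0$ leaves $\sum_{r:a_{lr}=1} u_r(s_{lr}') \leq \sum_{r:a_{lr}=1} u_r(x_r^*)$, which is $Q_l(\mathbf{s}_l',\mathbf{S}_{-l}^*) \leq Q_l(\mathbf{S}^*)$.

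The step I expect to be the subtle one is handling flows whose path consists solely of link $l$: no external link pins them down, so a priori player $l$ could try to shift capacity from multi-link flows (effectively capped at $x_r^*$) into such single-link flows to exploit the slack. The lifting argument rules this out precisely because capping multi-link allocations frees capacity only on \emph{other} links (preserving NUM-feasibility of $\mathbf{x}'$ there), while any increase in single-link allocations stays local to link $l$. It is also worth noting exactly where uniform weights enter: with non-uniform $b_r$, player $l$'s payoff would no longer coincide with link $l$'s share of the social welfare, and the final cancellation of terms for $r$ with $a_{lr}=0$ would fail, which suggests why the theorem is stated specifically for the uniform-payoff case.
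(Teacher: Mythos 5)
Your proof is correct and takes essentially the same route as the paper's: both reduce a hypothetical profitable unilateral deviation at link $l$ to a NUM-feasible rate vector whose objective would exceed that of $\mathbf{x}^*$, contradicting global optimality, with the single-link (local) flows being the only place where real work is needed. If anything, your capping-then-lifting reduction handles arbitrary deviations a bit more explicitly than the paper, which restricts attention to balanced transfers of capacity from non-local to local flows.
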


\begin{proof}
	The proof is by contradiction. All players play the strategy ${\bf S}$, such that $\forall_r \forall_{l: a_{lr} = 1} \; s_{lr} = x_r^*$, where $x_r^*$ is the global solution of the problem \eqref{num1}--\eqref{num1:3}. Consider a player $l \in \{ 1, \ldots, L \}$. Obviously, increasing allocations $s_{lr}$ of flows sharing multiple links does not improve the payoff of player $l$. Consider flows traversing only link $l$, denoted $\bar{\mathcal{R}}_l = \{ r : a_{rl}=1, \; \sum_{k=1}^L a_{rk} = 1 \}$. We call such flows {\it local}.
	
	Links with $\bar{R}_l \neq \emptyset$ must be completely filled (as otherwise increasing any local flow gives higher payoff). Thus in order to increase the allocation of such flow, at least one non-local flow's allocation must be decreased. But on the other hand, if non-local flows yield some capacity for local flows, then from Lemma \ref{lem1} the allocations of all local flows must increase.

	Let us denote $\mathcal{R}_l' = \mathcal{R}_l \setminus \bar{\mathcal{R}}_l$. Let $\epsilon_r > 0$ and $\eta_r > 0$ be taken so as to satisfy:
	\begin{equation}\label{thm:glob:0}
		\sum_{r \in \mathcal{R}_l'} \epsilon_r = \sum_{r \in \bar{\mathcal{R}}_l} \eta_r.
	\end{equation}

	Suppose such change in $l$th player's strategy would improve its payoff:
	\begin{equation}\label{thm:glob:1}
		\sum_{r \in \mathcal{R}_l'} u_r(x_r^* - \epsilon_r) + \sum_{r \in \bar{\mathcal{R}}_l} u_r(x_r^* + \eta_r) > \sum_{r \in \mathcal{R}_l} u_r(x_r^*).
	\end{equation}
	The value of global objective can be expressed by adding to the both sides of \eqref{thm:glob:1} the utilities of all flows that do not pass through $l$th link:
	$$
		\sum_{r \in \mathcal{R}_l'} u_r(x_r^* - \epsilon_r) + \sum_{r \in \bar{\mathcal{R}}_l} u_r(x_r^* + \eta_r) + \sum_{r \in \{1, \ldots, R \} \setminus \mathcal{R}_l } u_r(x_r^*)
	$$
	\begin{equation}\label{thm:glob:2}
		> \sum_{r \in \{ 1, \ldots, R \} } u_r(x_r^*) = \max_{{\bf x} : \; {\bf A}{\bf x} \leq {\bf c}} \sum_{r \in \{ 1, \ldots, R \} } u_r(x_r).
	\end{equation}
	The last expression is the global optimum of \eqref{num1}--\eqref{num1:3}. From \eqref{thm:glob:0}:
	$$
		\sum_{r \in \bar{\mathcal{R}}_l} (x_r^* + \eta_r) + \sum_{r \in \mathcal{R}_l'} (x_r^* - \epsilon_r)  = \sum_{r \in \bar{\mathcal{R}}_l \cup \mathcal{R}_l'} x_r^* \leq c_l,
	$$
	which shows that the lefthand side of \eqref{thm:glob:2} is a value of feasible solution. Hence the inequality in \eqref{thm:glob:2} gives a contradiction.
\end{proof}

This immediately implies that the strategy giving optimal solution is also the best possible equilibrium:

\begin{corollary}
	For the uniform payoffs the Price of Stability is $1$.
\end{corollary}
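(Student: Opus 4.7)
The plan is to observe that this corollary is essentially a direct packaging of the preceding theorem together with the definition of the social welfare function. The preceding theorem exhibits a specific Nash equilibrium, namely the strategy profile $\mathbf{S}^*$ in which every link along the path of each flow $r$ allocates the optimal NUM rate, $s_{lr}^* = x_r^*$ whenever $a_{lr}=1$. Since this profile belongs to the set $\mathcal{E}$, it provides a concrete lower bound on $\max_{\mathbf{S}\in\mathcal{E}}\mathcal{W}(\mathbf{S})$.

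First I would argue that the two maxima in the definition of the Price of Stability coincide. On the one hand, for any feasible strategy profile $\mathbf{S}$ the induced rates $x_r(\mathbf{S}) = \min_{k:a_{kr}=1} s_{kr}$ satisfy $\mathbf{A}\mathbf{x}(\mathbf{S}) \leq \mathbf{c}$ and $\mathbf{x}(\mathbf{S})\geq 0$, so $x_r(\mathbf{S})$ is NUM-feasible and therefore
\[
\mathcal{W}(\mathbf{S}) = \sum_{r=1}^R u_r(x_r(\mathbf{S})) \leq \sum_{r=1}^R u_r(x_r^*) = Q(\mathbf{x}^*).
\]
On the other hand, for the specific profile $\mathbf{S}^*$ one has $x_r(\mathbf{S}^*)=x_r^*$, so $\mathcal{W}(\mathbf{S}^*)=Q(\mathbf{x}^*)$. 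Hence $\max_{\mathbf{S}} \mathcal{W}(\mathbf{S}) = Q(\mathbf{x}^*)$ and this value is attained at $\mathbf{S}^*$.

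Next I would invoke the preceding theorem to conclude that $\mathbf{S}^* \in \mathcal{E}$, which gives
\[
\max_{\mathbf{S}\in\mathcal{E}} \mathcal{W}(\mathbf{S}) \geq \mathcal{W}(\mathbf{S}^*) = Q(\mathbf{x}^*) = \max_{\mathbf{S}} \mathcal{W}(\mathbf{S}),
\]
with the reverse inequality being trivial since $\mathcal{E}$ is contained in the set of all feasible strategy profiles. Plugging both equal maxima into the definition of $PoS$ yields $PoS = 1$.

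There is no real obstacle here: the entire content is supplied by the previous theorem together with the identification of $\mathcal{W}(\mathbf{S})$ with the NUM objective evaluated at the induced rates. The only subtlety worth spelling out explicitly is why the supremum of $\mathcal{W}$ over all strategy profiles coincides with the NUM optimum, which follows because (a) any strategy profile yields NUM-feasible rates via the minimum-along-path map, and (b) the profile built from $\mathbf{x}^*$ saturates this bound.
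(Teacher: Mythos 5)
Your argument is correct and is exactly the route the paper intends: the corollary is stated as an immediate consequence of the preceding theorem, relying on the identification (already noted in Subsection~\ref{sec:def}) of $\max_{\bf S}\mathcal{W}({\bf S})$ with the NUM optimum via the minimum-along-path rates. You merely spell out the two feasibility observations that the paper leaves implicit, so there is nothing to add.
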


Local flows (i.e. ones occupying only a single links), introduced in the above proof, play a key role in the attainability of equilibria. The following example shows that the optimal solution of NUM occurs at a non-equilibrium point of the game variant in which weights are proportional to flow lengths \eqref{weights1}.

\begin{example}
	
Consider the game with weights $b_r = 1/ \sum_{k=1}^L a_{kr}$, with $10$ players, corresponding to links of equal capacity, $c_1 = \ldots = c_{10} = 6$. As depicted in Figure \ref{fig:ex2}, there are two flows: one that passes through all links, with utility function $u_1(x_1) = 10 \log x_1$, and one local, with utility function $u_2(x_2) = 2 \log x_2$. The global optimum is $x_1^* = 5$ and $x_2^* = 1$.

The payoff of $1$st player for allocations $s_{11}=5$ and $s_{12} = 1$ is:
$$\frac{u_1(x_1^*)}{\sum_{k=1}^{10} a_{k1}} + \frac{u_2(x_2^*)}{\sum_{k=1}^{10} a_{k2}}  = \frac{10}{10} \log 5 + \frac{2}{1} \log 1 = \log 5.$$
However, changing strategy of player $1$ to $s_{11}=2$ and $s_{12} = 4$ gives a higher payoff: $\log 2 + 2 \log 4 = \log 2 + \log 16 > \log 5$.

\end{example}

\begin{figure}[!ht]
\begin{center}
    \includegraphics[width=5.3in]{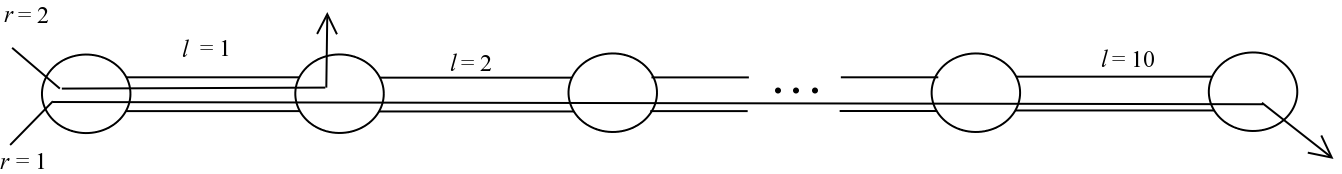}
\end{center}
\caption{Illustration of Example 3: one flow passes through all links, one flow is local.}\label{fig:ex2}
\end{figure}

\begin{corollary}
    The global optimum of problem \eqref{num1}--\eqref{num1:3} is not necessarily a Nash equilibrium of a game with path length payoffs (i.e. weights $\forall_{r} \; b_r = 1 / \sum_{k=1}^L a_{kr}$).
\end{corollary}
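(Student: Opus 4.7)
The plan is to observe that the corollary is a pure non-existence statement: to establish ``not necessarily'' it suffices to exhibit a single instance of the NUM problem whose global optimum, when interpreted as a strategy profile, fails the Nash equilibrium property under the path-length weighting $b_r = 1/\sum_{k=1}^L a_{kr}$. Example~3 is precisely such an instance, so the proof reduces to recording what that example demonstrates and verifying the two ingredients explicitly.

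First I would set up the candidate strategy profile induced by the optimum: for every link $l$, put $s_{lr} = x_r^{*}$ whenever $a_{lr}=1$, so that the induced transmission rates along each flow's path are exactly $x_r^{*}$. In Example~3 this means $s_{l1}=5$ for all $l \in \{1,\dots,10\}$ and $s_{12}=1$ on the (unique) link carrying the local flow. Next I would compute the payoff of player~$1$ under the path-length weights: the first flow has length $10$ and contributes $\tfrac{1}{10}\,u_1(5) = \log 5$, while the local second flow has length $1$ and contributes $\tfrac{1}{1}\,u_2(1) = 0$. Thus $Q_1({\bf S}) = \log 5$.

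Then I would exhibit the profitable unilateral deviation used in Example~3: player~$1$ reallocates its capacity $c_1=6$ to $s_{11}'=2$ and $s_{12}'=4$, leaving all other players' strategies fixed. Since the second flow is local to link~$1$, its transmission rate becomes $4$; the first flow is still bottlenecked so only player~$1$'s own payoff is affected by this change in the terms through the $\min$ operator. Player $1$'s payoff becomes $\tfrac{1}{10}\,u_1(\min\{2,5,\dots,5\}) + \tfrac{1}{1}\,u_2(4) = \log 2 + 2\log 4 = \log 32$, and since $\log 32 > \log 5$, player~$1$ strictly benefits from the deviation.

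The only thing to be careful about is that the deviation is feasible ($2+4 = 6 \le c_1$) and that it really is unilateral in the sense required by the Nash definition — other links' strategies remain fixed, and the only coupling is via the $\min$ in the payoff formula \eqref{payoff}, which behaves as claimed above. There is no serious obstacle here; the corollary is essentially just the formal statement of the observation embodied by Example~3, and the proof is a one-line invocation followed by the explicit payoff comparison.
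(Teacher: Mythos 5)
Your proposal is correct and is essentially the paper's own argument: the corollary is derived directly from Example~3, and you reproduce that example's payoff computation ($Q_1 = \log 5$ at the optimum versus $\log 2 + 2\log 4 = \log 32$ after the unilateral deviation $s_{11}'=2$, $s_{12}'=4$) together with the feasibility check. The only cosmetic imprecision is the remark that ``only player~1's own payoff is affected'' --- the deviation also lowers other players' payoffs via the $\min$ --- but this is irrelevant to the Nash condition, which only requires that the deviator strictly gains.
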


One may observe that, provided there are no local (single-link) flows, it is easy to characterize strategies giving pure Nash equilibria, even for a more general class of games (including the one considered in this paper, see Remark \ref{rem1}).

\begin{theorem}\label{thm:local}
	If there are no local flows, i.e. for all $r \in \{ 1, \ldots, R \}$, $\sum_{l=1}^L a_{lr} > 1$, any strategy profile such that:
	$$
		\forall_{l_1, l_2} \; \forall_{r: a_{l_1r} = a_{l_2r} = 1} \; s_{l_1r} = s_{l_2r}
	$$
	constitutes a pure Nash equilibrium of the considered game (regardless the values of $b_r$).
\end{theorem}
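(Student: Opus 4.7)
The plan is a direct best-response check: fix any player $l$, show that the payoff is already maximized by the allocation $\mathbf{s}_l$ in the given profile, so unilateral deviation cannot strictly improve $Q_l$. The key leverage comes from combining the assumption of equal allocations along each flow's path with the no-local-flow assumption, which together force the ``competitor'' link $l'\neq l$ on each flow's path to act as a binding minimum that player $l$ cannot surpass by allocating more.

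More concretely, I would first note that under the hypothesized profile $\mathbf{S}$, for every flow $r$ with $a_{lr}=1$ there exists some link $l'\neq l$ with $a_{l'r}=1$ (since $\sum_k a_{kr}>1$), and by the equal-allocation assumption $s_{l'r}=s_{lr}$. Hence
\[
\min_{k\neq l:\, a_{kr}=1} s_{kr} \;=\; s_{lr},
\]
because every such $s_{kr}$ equals $s_{lr}$. Now consider any alternative strategy $\mathbf{s}_l'$ for player $l$ (with the other players still playing $\mathbf{S}_{-l}$). The bottleneck rate of flow $r$ under $(\mathbf{s}_l', \mathbf{S}_{-l})$ is
\[
\min\Bigl\{s_{lr}',\, \min_{k\neq l:\,a_{kr}=1} s_{kr}\Bigr\} \;=\; \min\{s_{lr}',\, s_{lr}\} \;\leq\; s_{lr}.
\]
Since each $u_r$ is strictly increasing and $a_{lr}b_r\geq 0$, summing over $r$ with $a_{lr}=1$ yields
\[
Q_l(\mathbf{s}_l',\mathbf{S}_{-l}) \;=\; \sum_{r=1}^R a_{lr} b_r u_r(\min\{s_{lr}', s_{lr}\}) \;\leq\; \sum_{r=1}^R a_{lr} b_r u_r(s_{lr}) \;=\; Q_l(\mathbf{S}).
\]
So $\mathbf{s}_l$ is a best response for player $l$, and since $l$ was arbitrary, $\mathbf{S}$ is a pure Nash equilibrium.

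I do not expect any real obstacle here; the statement is essentially a tautology once one recognizes that the $\min$ in the payoff formula makes increasing a single coordinate $s_{lr}$ useless when some other link on the path already matches that value, while decreasing any $s_{lr}$ strictly hurts. The only subtlety worth stating explicitly is why the existence of a \emph{second} link on each flow's path is essential: if a flow were local ($\sum_k a_{kr}=1$), then the $\min$ over $\{k:a_{kr}=1\}$ would be $s_{lr}$ alone, and player $l$ could profitably reshuffle capacity toward this local flow, as demonstrated in Example~3. The argument also makes no use of the particular form of the weights $b_r$ or of the isoelasticity of $u_r$, which is why the conclusion holds in the stated generality.
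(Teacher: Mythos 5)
Your proof is correct and follows essentially the same route as the paper's: a direct best-response check showing that, because some other link on each flow's path already allocates $s_{lr}$, raising $s_{lr}$ cannot move the bottleneck minimum and lowering it can only hurt. The paper states this in two sentences; your version merely makes the computation $\min\{s_{lr}',s_{lr}\}\leq s_{lr}$ explicit.
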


\begin{proof}
	Fix $l \in \{ 1, \ldots, L \}$. Any change of $l$th player's strategy vector ${\bf s}_l$ into ${\bf s}_l'$, such that $s_{lr}' = s_{lr} + \epsilon$, $\epsilon > 0$, cannot increase the payoff, unless all the players $l'$ sharing the path with $l$ also increase allocation: $s_{l'r}' = s_{l'r} + \epsilon$. Any decrease in allocation, $s_{lr}' = s_{lr} - \epsilon$, $\epsilon > 0$, can only decrease the local payoff.
\end{proof}

Although the strategies characterized in Theorem \ref{thm:local} are points of equilibrium of the considered game, they can be arbitrarily bad in terms of the players' outcomes. Observe that a zero-allocation strategy ($\forall_{l,r} \; s_{lr} = 0$) is a Nash equilibrium if there are no local flows. Consequently, in such case the Price of Anarchy is unbounded.

While in general the game has inefficient equilibria, it is possible to bound the Price of Anarchy for some important special cases. The following network topology can be seen as fragment of larger structure that occur repeatedly in complex networks.

\begin{example}

Let us consider the following network structure. There are $L \geq 2$ links of equal capacity connected serially, as depicted in Figure \ref{fig:ex4}. There are $L+1$ flows, where each link contains one local flow, and there is a single ``long'' flow passing through all links (this flow has weight $w_{L+1}$). For such network setup, it is possible to analytically derive the value of PoA. It is enough to observe that the worst equilibrium can occur in one of two cases: either when each player allocates zero for the ``long'' flow and $C$ for local flow; or when each player allocates $k = \max_{l=1,\ldots,L} s^{(I)}_{ll}$ for local flow and the remaining $C-k$ for the ``long'' flow. Which one of these is the worst case depends on the weights of flows. Let $W = \sum_{r=1}^L w_r$ and $\omega = \max_{l=1,\ldots,L} w_l$. In the first case we obtain:
$$
	PoA_1 = \left( \frac{w_{L+1}^{\frac{1}{\gamma}}}{W^{\frac{1}{\gamma}}} + 1 \right)^{\gamma},
$$
while in the second case:
$$
	PoA_2 = \left( w_{L+1}^{\frac{1}{\gamma}} + W^{\frac{1}{\gamma}} \right)^{\gamma} \left( b_{L+1} w_{L+1}^{\frac{1}{\gamma}} + \omega^{\frac{1}{\gamma}} \right)^{1-\gamma} \left( b_{L+1} w_{L+1}^{\frac{1}{\gamma}} + W \omega^{\frac{1}{\gamma}-1} \right)^{-1}.
$$
The actual value of PoA is equal to $\max\{ PoA_1, PoA_2 \}$.

\end{example}

\begin{figure}[!ht]
\begin{center}
    \includegraphics[width=5in]{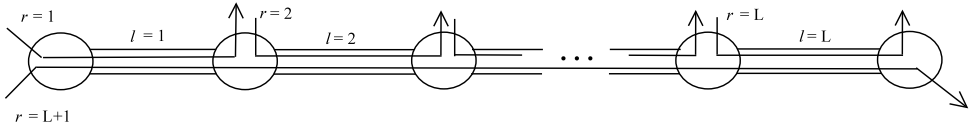}
\end{center}
\caption{Illustration of serial network setup from Example 4.}\label{fig:ex4}
\end{figure}

Let us define $\chi = \frac{w_{L+1}}{W}$, that is the ratio of importance of ``long'' flow to local flows. This enables us to conclude the following:

\begin{corollary}
	The Price of Anarchy for a serial network with $L$ local flows and one flow of length $L$ (Figure \ref{fig:ex4}) satisfies:
	
	1) if $\chi \rightarrow 0$ then PoA $\rightarrow 1$,
	
	2) if $\chi = 1$ then PoA $ \leq (1 + b_{L+1}^{1/\gamma})^{1-\gamma}(b_{L+1}^{1/\gamma})^{-1}$ (in particular, for uniform payoffs, PoA $ \leq 2$),
	
	3) if $\chi \rightarrow \infty$ then PoA $\rightarrow \infty$.
\end{corollary}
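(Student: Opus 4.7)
The plan is to read off all three bounds directly from the explicit expressions for $PoA_1$ and $PoA_2$ derived in Example 4, together with the identity $PoA = \max\{PoA_1, PoA_2\}$. As a preliminary step, I would rewrite both formulas in the scale-free variables $\chi = w_{L+1}/W$ and $\tilde{\omega} = \omega/W \in [1/L,1]$: substituting $w_{L+1} = \chi W$ and factoring $W^{1/\gamma}$ out of each bracket causes the overall factor $W$ to cancel, by homogeneity of the isoelastic utilities in their weights, leaving $PoA_1 = (\chi^{1/\gamma}+1)^\gamma$ and $PoA_2$ as a function of only $\chi$, $\tilde{\omega}$, $b_{L+1}$, and $\gamma$.

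Cases (1) and (3) are then routine. For $\chi \to 0$, obviously $PoA_1 \to 1$; and in $PoA_2$ the terms carrying $\chi^{1/\gamma}$ vanish, so the expression collapses to $\tilde{\omega}^{(1-\gamma)/\gamma}/\tilde{\omega}^{1/\gamma-1} = \tilde{\omega}^{0} = 1$, giving $PoA \to 1$. For $\chi \to \infty$ the bound $PoA_1 = (\chi^{1/\gamma}+1)^\gamma$ already diverges, so the maximum diverges regardless of how $PoA_2$ behaves.

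Case (2), $\chi = 1$, contains the real work. Plugging $w_{L+1} = W$ into $PoA_1$ yields $2^\gamma \leq 2$. For $PoA_2$ I would reparametrize by $t = \tilde{\omega}^{1/\gamma} \in (0,1]$, so that $\tilde{\omega}^{1/\gamma-1} = t^{1-\gamma}$. After cancelling the common power of $w_{L+1}^{1/\gamma}$ from numerator and denominator, $PoA_2$ simplifies to a one-parameter expression of the shape $2^\gamma (b_{L+1}+t)^{1-\gamma}/(b_{L+1}+t^{1-\gamma})$. Its derivative in $t$ factors as a positive quantity times $(1-t^{-\gamma})$, which is negative on $(0,1)$; therefore the expression is strictly decreasing on $(0,1]$ and its supremum is attained as $t \to 0^+$. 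Bounding this supremum by the claimed quantity $(1+b_{L+1}^{1/\gamma})^{1-\gamma}/b_{L+1}^{1/\gamma}$ finishes the estimate; and in the uniform-payoff case $b_{L+1}=1$ both $PoA_1 = 2^\gamma$ and this supremum are at most $2$, yielding $PoA \leq 2$.

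The main obstacle lies in case (2): reducing the three-factor product defining $PoA_2$, whose factors involve $w_{L+1}, W, \omega, b_{L+1}$ each raised to different fractional exponents, to a clean one-parameter form requires careful bookkeeping of exponents, after which a single derivative calculation decides the sign of monotonicity. The constraint $\tilde{\omega} \geq 1/L$ (from $\max_l w_l \geq W/L$) means that for fixed $L$ the true supremum is attained at an interior point and is strictly smaller; the clean stated bound is obtained as the worst case over all serial networks of this form, i.e.\ by letting $L$ grow.
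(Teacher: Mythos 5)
The paper offers no proof of this corollary --- it is presented as a direct read-off from the closed-form expressions $PoA_1$, $PoA_2$ of Example~4 --- so your route (rewrite in the scale-free variables $\chi$ and $\tilde\omega=\omega/W$, then take limits and suprema) is exactly the intended one. Your homogeneity check is correct, cases (1) and (3) are handled correctly, and your reduction of $PoA_2$ at $\chi=1$ to $2^\gamma(b_{L+1}+t)^{1-\gamma}/(b_{L+1}+t^{1-\gamma})$ with $t=\tilde\omega^{1/\gamma}$, together with the monotonicity argument (the derivative's sign is governed by $b_{L+1}(1-t^{-\gamma})<0$), is also correct.

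The gap is the last step of case (2). Your own computation gives $\sup_{t\to 0^+} PoA_2 = 2^\gamma\, b_{L+1}^{-\gamma}$, and you then assert that ``bounding this supremum by the claimed quantity $(1+b_{L+1}^{1/\gamma})^{1-\gamma} b_{L+1}^{-1/\gamma}$ finishes the estimate'' without proving that inequality. It is in fact false in general: for uniform payoffs $b_{L+1}=1$ the claimed bound is $2^{1-\gamma}$ while your supremum is $2^\gamma$, and $2^\gamma>2^{1-\gamma}$ whenever $\gamma>1/2$. Worse, since $PoA\geq PoA_1=2^\gamma$ at $\chi=1$, no refinement of the $PoA_2$ analysis can close this: the general inequality in part~2), taken literally together with the formulas of Example~4, cannot hold for $\gamma>1/2$ and $b_{L+1}=1$. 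So either the expression for $PoA_2$ in Example~4 or the bound in part~2) contains a typo (a plausible candidate: $b_{L+1}w_{L+1}^{1/\gamma}$ versus $(b_{L+1}w_{L+1})^{1/\gamma}$, the latter being consistent with the allocation weights $v_r=(b_rw_r)^{1/\gamma}$ used elsewhere in the paper), and a complete proof would have to identify and resolve that discrepancy rather than wave it away. What your argument does legitimately establish is parts 1) and 3), and the parenthetical conclusion of part 2) that $PoA\leq 2$ for uniform payoffs (since both $2^\gamma$ and your supremum $2^\gamma$ are at most $2$); the general form of the bound in 2) is not established.
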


The above statement applies to both uniform and path length payoffs. One interesting case is when $\chi = 1$, that is, the sum of local flow weights is equal to the weight of ``long'' flow. From 2) it can be seen that for uniform payoffs ($b_{L+1}=1$) the PoA is no greater than 2. However, in the game variant which penalizes ``long'' flow ($b_{L+1} = 1/L$) the PoA grows to infinity with increasing $L$.

\section{Computational experiments}\label{sec:comp}

In this section we present an experimental study based on a prototype implementation of Algorithm \ref{alg2}. Routing matrices used in the experiments were generated randomly, with probability of a flow passing through a link equal to $0.5$. Link capacities were in the range $10$--$100$ Mbps. Utility functions were of the form \eqref{isoel} with parameter $\gamma = 0.5$. 

\begin{figure}[!ht]
\begin{center}
    \includegraphics[width=4.2in]{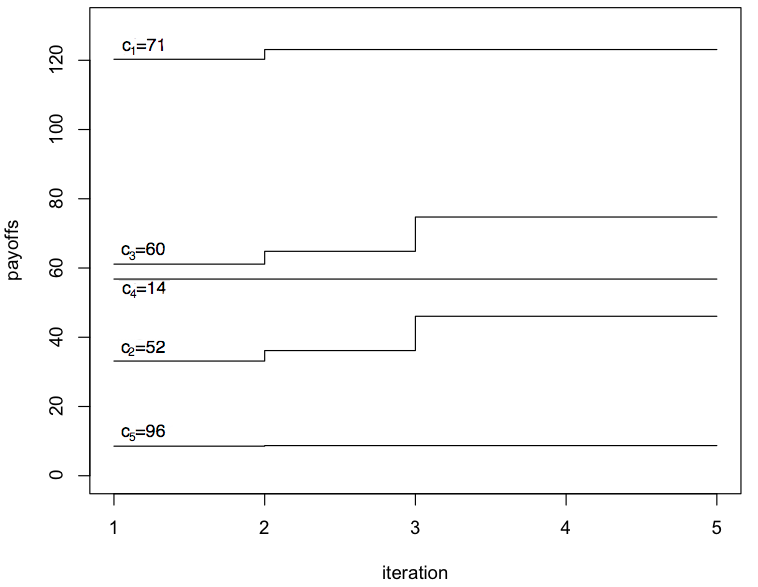}
\end{center}
\caption{Small network: payoffs of players in subsequent iterations.}\label{fig:ce1}
\end{figure}

\begin{figure}[!ht]
\begin{center}
    \includegraphics[width=4.2in]{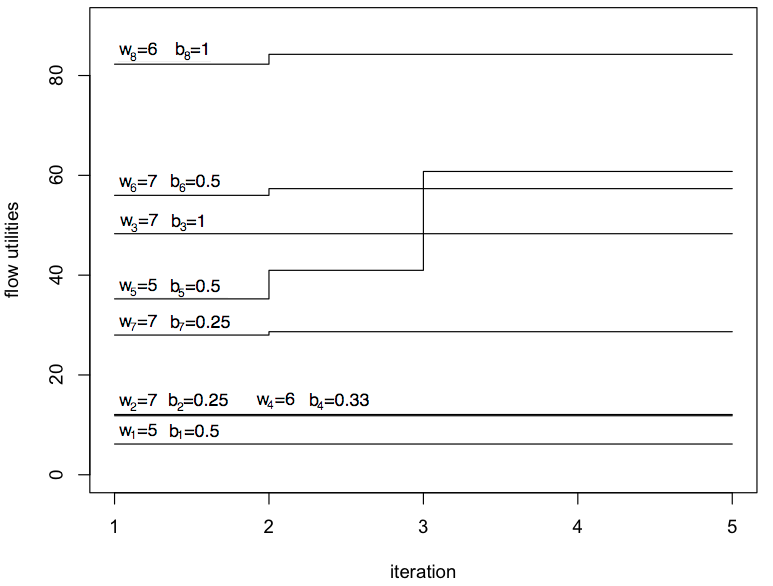}
\end{center}
\caption{Small network: utilities of flows in subsequent iterations.}\label{fig:ce2}
\end{figure}

The first experiment involves a small example of $5$-link network with $8$ flows. For the game variant with weights inversely proportional to path lengths, Figure \ref{fig:ce1} shows how players' payoffs changed in the subsequent iterations, while Figure \ref{fig:ce2} shows the corresponding changes of flows' utilities. It can be seen that only $4$ out of $8$ flows get improved by Algorithm \ref{alg2}, as compared to the initial solution computed by one-step local algorithm. Moreover, algorithm stabilizes all transmission rates after just $3$ iterations. Figure \ref{fig:ce3} compares the total utility \eqref{num1} obtained in this process with its value obtained by the considered algorithm in the game variant with uniform payoffs (all weights $b_r = 1$). The latter variant gives a slightly better solution, but it overestimates the social welfare, which leads to a small degradation of solution in case of larger networks. Additionally, both solutions are compared to the optimal one, computed by directly solving NUM problem \eqref{num1}--\eqref{num1:3} using interior-point method solver.

\begin{figure}[!ht]
\begin{center}
    \includegraphics[width=4.2in]{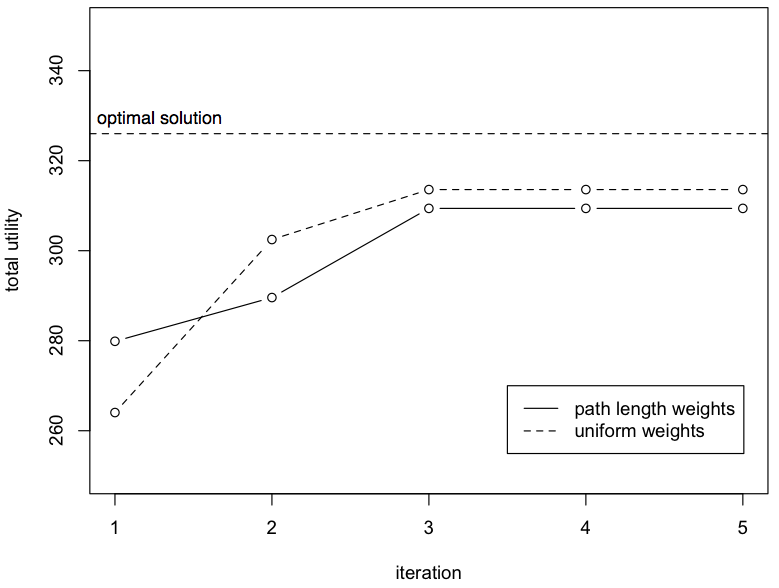}
\end{center}
\caption{Small network: comparison of total utility (NUM objective) obtained from Algorithm \ref{alg2} in two game variants.}\label{fig:ce3}
\end{figure}

\begin{figure}[!ht]
\begin{center}
    \includegraphics[width=4.2in]{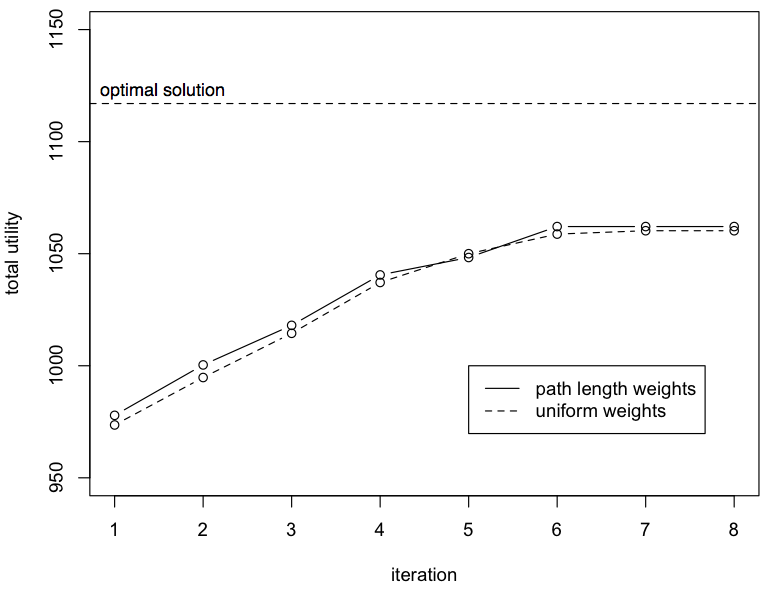}
\end{center}
\caption{Large network: comparison of total utility (NUM objective) obtained from Algorithm \ref{alg2} in two game variants.}\label{fig:ce4}
\end{figure}

In the second experiment there were $100$ links and $200$ flows. Figure \ref{fig:ce4} presents the comparison of solution value changes throughout the iterations, compared to the optimal solution. In the variant of weights inversely proportional to path lengths the algorithm stopped after $6$ iterations, while for uniform weights it took $7$ iterations. Both solutions are similar and close to the optimum (only about $5\%$ worse).

The above results show the huge speedup that can be achieved with the use of the presented algorithm, as compared to the state of the art exact gradient-based algorithms for NUM. In particular, \cite{beck2013optimal} gives an algorithm that has asymptotically faster convergence rate than typical gradient-based methods, but still requires about 1000 iterations to reach $5\%$ region around the optimum even for 5-link network. In contrast, our procedure reaches such accuracy in 3 iterations for 5-link network and in 7 iterations for 100-link network. It should be noted that execution time of a single iteration in both algorithms is very similar.

In \cite{wei2010distributed} authors proposed to use a distributed variant of Newton method for solving NUM on the same network topology as in \cite{beck2013optimal}. The algorithm required about 100 iterations to reach $5\%$ region around the optimum. However, a single step in distributed Newton method is more expensive than one step computation in our method.

Concluding, from the practical point of view our method (which in general case is a heuristic) is better suited for real-time distributed optimization problems, such as the ones encountered in computer networks. It should be noted that the implementation used in the presented experiments neglects some technical details related to real networks. In particular, detecting the flow saturation based on the measurements of rate variability requires a careful treatment and considering real-time programming issues.

\section{Conclusions}\label{sec:concl}

The distributed optimization algorithm presented in this paper provides a fast and scalable method of solving network utility maximization (NUM) problem. By considering links (network routers) as decision-making agents, it extends the idea of solving the problem separately for each of them, by iteratively improving local solutions, based on the detection of minimal rate allocation along flows' paths. Considering this algorithm as a method of computing a strategy for the introduced capacity allocation game, we proved that it finds a strongly Pareto-optimal pure Nash equilibrium. Since in the considered game links are considered players, and payoffs are interpreted as total utilities of all flows passing through a link, the objective of NUM coincides with the sum of payoffs (social welfare), if the utilities of flows are weighted inversely proportional to flow path lengths.

We also proved that in the game variant with equal weights of all flows, the considered game has a pure Nash equilibrium at the point of optimum of NUM. Unfortunately, if there are no flows occupying single links, then it is possible to construct a pure Nash equilibrium resulting in arbitrarily bad solution (i.e. the Price of Anarchy diverges to infinity). Apart from these general results, a detailed analysis concerning Price of Anarchy has been carried out for a special serial network structure.

Computational experiments show that the presented algorithm is very efficient, especially compared to exact gradient methods, and requires a very small number of allocation updates in order to reach an equilibrium, regardless of the network size and number of concurrent flows. Solutions even for very large randomly generated networks were within a few percent of optimal value.

\bibliographystyle{plain}
\bibliography{paper}

\end{document}